\def\etal{{et al.~}}
\newcommand{\remove}[1]{}
\newtheorem{theo}{Theorem}
\newtheorem{lem}[theo]{Lemma}
\newtheorem{corr}[theo]{Corollary}
\newenvironment{keyword}[1][Keywords]{\begin{trivlist}
\item[\hskip \labelsep {\bfseries #1}]}{\end{trivlist}}
\let\geq\geqslant
\let\leq\leqslant
\let\le\leqslant
\def\section{\@startsection {section}{1}{\z@}{-3.0ex plus -0.5ex minus
-.2ex}{2.3ex plus .2ex}{\large\bf}}
\def\subsection{\@startsection{subsection}{2}{\z@}{-2.5ex plus -0.5ex
minus -.2ex}{1.5ex plus .2ex}{\normalsize\bf}}
\def\@fnsymbol#1{\ensuremath{\ifcase#1\or 1\or 2\or
    3\or 4\or 5\or 6\or 7 \or 8\ or 9 \or 10\or 11 \else\@ctrerr\fi}}
\title{An Algorithm for Computing Constrained Reflection Paths 
in Simple Polygons}
\author{Arijit Bishnu,\thanks{Advanced Computing and Microelectronics Unit,
Indian Statistical Institute, Kolkata 700108, India
$~~~~~~~~$(arijit@isical.ac.in).}
  \and Subir Kumar Ghosh,\thanks{School of Technology \& Computer Science, Tata
Institute of Fundamental Research, Mumbai 400005,
India $~~~~~~~~$(ghosh@tifr.res.in).}%
  \and Partha Pratim Goswami,\thanks{Institute of Radiophysics and Electronics,
University of Calcutta, Kolkata 700009,
India $~~~~~~~~$(ppg.rpe@caluniv.ac.in).}%
  \and Sudebkumar Prasant Pal,\thanks{Department of Computer Science and
Engineering, Indian Institute of Technology, Kharagpur 721302, India
$~~~~~~~~$(spp@cse.iitkgp.ernet.in).}%
  \and Swami Sarvattomananda\thanks{School of Mathematical Sciences,
Ramakrishna Mission Vivekananda University, Belur, India 711202,
$~~~~~~~$ (shreesh@rkmvu.ac.in).}%
}
\date{}
\begin{document}

\maketitle
\begin{abstract}
Let $s$ be a source point and $t$ be a destination point inside an $n$-vertex 
simple polygon $P$. Euclidean shortest paths [Lee and Preparata, Networks,
1984; Guibas \etal, Algorithmica, 1987] and minimum-link paths [Suri, CVGIP,
1986; Ghosh, J. Algorithms, 1991] between $s$ and $t$ inside $P$ have been well
studied. Both these kinds of paths are simple and piecewise-convex.
\remove{Similar paths in the context of diffuse or specular reflections have
not been studied.} 
However, computing optimal paths in the context of diffuse or specular
reflections does not seem to be an easy task.
A path from a light source $s$ to $t$ inside $P$ is called a
{\it diffuse reflection path} if the turning points of the path lie in the
interiors of the boundary edges of $P$. A diffuse reflection path is said to be
{\it optimal} if it has the minimum number of turning points amongst all diffuse
reflection paths between $s$ and $t$. The minimum diffuse reflection path may
not be simple. The problem of computing the minimum diffuse reflection path in
low degree polynomial time has remained open.

In our quest for understanding the geometric structure of the minimum diffuse
reflection paths vis-a-vis shortest paths and minimum link paths, we define a
new kind of diffuse reflection path called a {\it constrained diffuse
reflection path} where (i) the path is simple, (ii) it intersects only the
{\it eaves} of the Euclidean shortest path between $s$ and $t$, and 
(iii) it intersects each eave exactly once. For computing a minimum
constrained diffuse reflection path from $s$ to $t$, we present an
$O(n(n+\beta))$ time algorithm, where $\beta =\Theta (n^2)$  in the worst
case. Here, $\beta$ depends on the shape of the polygon. We also establish some
properties relating minimum constrained diffuse reflection paths and minimum
diffuse reflection paths. Constrained diffuse reflection paths introduced in
this paper provide new geometric insights into the hitherto unknown structures
and shapes of optimal reflection paths. Our algorithm demonstrates how
properties like convexity, simplicity, complete visibility, etc., can be
combined in computing and understanding diffuse reflection paths that are
optimal or close to optimal.
\end{abstract}

\begin{keyword}
diffuse reflection, simple polygon, minimum diffuse reflection path, visibility,
constrained diffuse reflection path
\end{keyword}
\newpage

\section{Introduction}
\label{sec:intro}
\remove{Let $s$ be a source point and $t$ be a destination point inside an
$n$-vertex simple polygon $P$. Reachability problems in terms of Euclidean
shortest paths \cite{ghlst-ltavs-87,lp-net-84} and minimum link paths
\cite{g-cvpcs-91,s-ltamlp-86} between a source point $s$ and a destination 
point $t$ inside a simple polygon $P$ have been well
studied. Both these kinds of paths are simple and have piecewise
convexity property. Similar paths in the context of diffuse or specular
reflections have not been studied. In this paper, we study such diffuse
reflection paths.}

\subsection{Visibility and reflections}
\label{ssec:vis-and-reflect} 
Problems of {\it direct} visibility have been studied 
extensively in the last few decades (see \cite{ghosh-book-2007}).
Let $P$ be an $n$-vertex simple polygon where $int(P)$ and $bd(P)$
denote the interior and boundary of $P$, respectively. 
Two points inside a polygon $P$ are said to be $visible$ (directly) if the line
segment joining them lies totally inside $int(P) \cup bd(P)$. The region of $P$
visible directly from a point light source $s$ in $P$ is called the 
{\it visibility polygon} of $P$ from $s$ (see Figure \ref{ldiffusespecular}). 
Efficient algorithms have been designed for computing visibility 
polygons under various conditions \cite{ghosh-book-2007}. 
Note that some points of $P$ that are not directly visible or illuminated from
$s$, can become visible due to multiple reflections on the edges of $P$ (see
Figure \ref{ldiffusespecular}). 

\medskip

We are interested in computing the visibility of a point from $s$ by multiple
reflections inside $P$; the sequence of multiple reflections is simply a
reflection path. Reflections are of two types -- specular and diffuse.
As per the law of reflection, the reflection of a light
ray at a point is called {\it specular} if the angle of incidence is equal to
the angle of reflection. The other type of reflection of light is
called {\it diffuse} reflection that happens for most reflecting surfaces.
Here, a ray incident 
at a point of an edge $e$ is 
reflected in all possible interior directions except along the edge $e$.
We assume that all edges of $P$ can reflect in this manner.
We also assume that any ray of light incident at a vertex is absorbed 
and not reflected.

\begin{figure*}[ht]
\begin{minipage}[b]{1.0\textwidth}
\centering
\includegraphics[width=0.6\columnwidth]{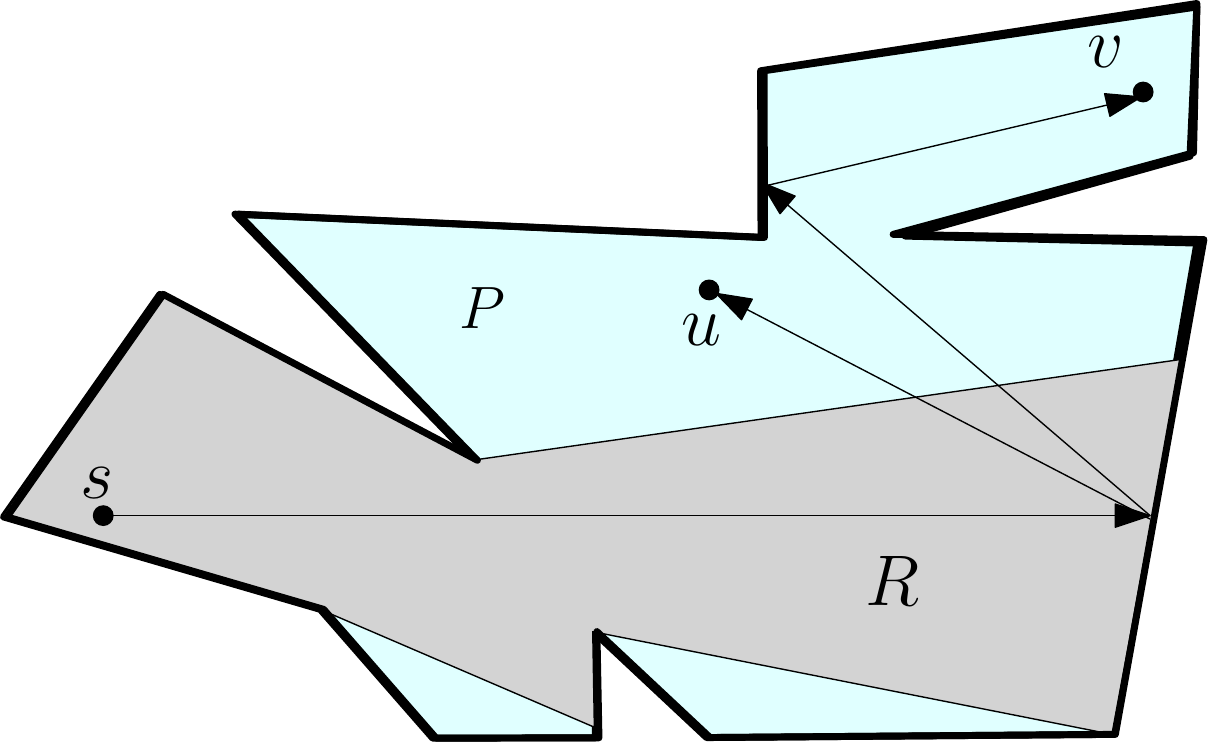}\\
\captionof{figure}{The region $R$ is directly visible from $s$. 
A ray from $s$ reaches $u$ after one specular 
reflection. A ray from $s$ reaches $v$ after two 
diffuse reflections.}
\label{ldiffusespecular}
\end{minipage}
\end{figure*}
\begin{figure*}[ht]
\begin{minipage}[b]{1.0\textwidth}
  \centering
 \includegraphics[width=0.7\columnwidth]{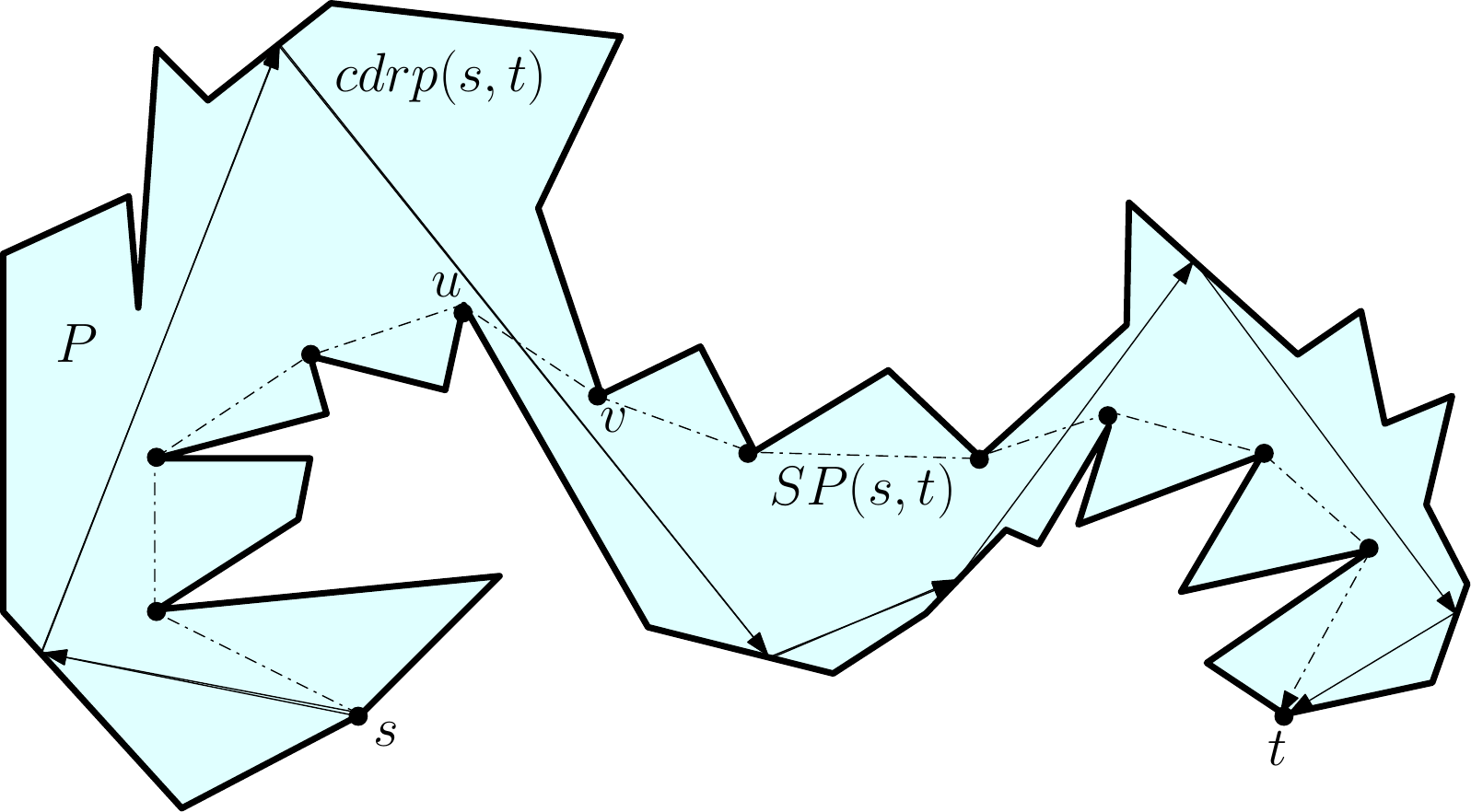}\\
\captionof{figure}{A $cdrp(s,t)$ intersects only eaves of $SP(s,t)$.}
\label{constrainedpath}
\end{minipage}
\end{figure*}

\medskip

Multiple reflections arise naturally in the realistic rendering of
three-dimensional scenes 
\cite{addpp-vmr-96,b-rsdo-1993,fdfhp-94,ggmnps-acdr-2012}.
In rendering of images by ray-tracing, light sources reachable by a 
small number of reflections through an image pixel would contribute
intensely at the pixel because of limited loss of intensity at each stage of
reflection. Computing diffuse reflection paths of light arriving 
from a light source by a small number of reflections (or turning points) 
is therefore, an important problem. 
In this paper we focus on the polynomial time computation of
certain constrained paths of multiple diffuse
reflections. Prior works on visibility with multiple reflections are reviewed in
Section~\ref{sec:prevresult}.

\subsection{Euclidean shortest path and minimum link path}
\label{ssec:short-link-path}
A polygonal path is said to be {\it simple} if it is not self intersecting. 
Henceforth, we use the term path instead of polygonal path. 
A path from $s$ to $t$ inside $P$ is said to be {\it convex} if it makes only
unidirectional turns (either left or right) while traversing from $s$ to $t$. 
If the turns are always right (or, left) turns, the path is right convex
(respectively, left convex). A path from $s$ to $t$ inside $P$ is said to be
{\it piecewise-convex} if it can be broken into alternating sequences of left
and right convex paths.

Let $SP(s,t)$ denote the Euclidean shortest path, the path of the shortest
length, \remove{\cite{ghlst-ltavs-87,lp-net-84}} from $s$ to $t$ inside $P$.
Let $uv$ be an edge of $SP(s,t)$ such that if $P$ is cut along $uv$, then $s$
and $t$ belong to the two different subpolygons of $P$ (see Figure
\ref{constrainedpath}). Such an edge $uv$ is called an {\it eave}
\cite{ghosh-book-2007}. Notice that $uv$ is a diagonal of $P$.
$SP(s,t)$ is simple and piecewise convex with reversal of turns at eaves.

\medskip

A minimum link path \cite{g-cvpcs-91,s-ltamlp-86} between two points $s$ 
and $t$ (denoted as $mlp(s,t)$) is a polygonal path inside $P$ having the
minimum number of turns. Like $SP(s,t)$, $mlp(s,t)$ is also simple and 
piecewise convex. Moreover, $mlp(s,t)$ intersects $SP(s,t)$ 
only at eaves and each eave is intersected exactly once \cite{ghosh-book-2007}.
The number of links in a minimum link path between any
two points of $P$ is called the {\it link distance} between them. 

\subsection{Diffuse reflection paths}
\label{ssec:reflectionpath}
As stated earlier, reachability problems in terms of Euclidean
shortest paths and minimum link paths between a source point $s$ 
and a destination point $t$ inside a simple polygon $P$ have been well
studied \cite{ghosh-book-2007}.
In this paper, we seek to compute a special class of optimal diffuse reflection
paths that are analogous to $SP(s,t)$ and $mlp(s,t)$. We call such paths 
\emph{constrained diffuse reflection paths} which we define next. 

\medskip

A diffuse reflection path $drp(s,t)$ from $s$ to $t$ is a path inside $P$ from
$s$ to $t$ such that the turns of
the path are in the interiors of the edges of $bd(P)$. Note that every
$drp(s,t)$ must intersect all the eaves of $SP(s,t)$. If all turning
points of a $mlp(s,t)$ lie on $bd(P)$, then $mlp(s,t)$ is a 
$drp(s,t)$. A $drp(s,t)$ is said to be {\it optimal} if it has the minimum
number of reflections amongst all diffuse reflection paths between $s$ and $t$.
An optimal $drp(s,t)$ can always be computed in exponential time
\cite{ppd-vmdr-98}. Aronov et al.~\cite{adiy-cdfs-2006} claimed that 
the combinatorial complexity of the visible region after $k$ diffuse 
reflections is $O(n^9)$, for any $k \le n$. It seems that this result
may be used for computing an $mdrp(s,t)$ in very high order polynomial 
time but no explicit procedure is stated in the paper. Designing a low degree 
polynomial time algorithm for computing an $mdrp(s,t)$ remains open.
In Section \ref{sec:prevresult}, we state the known approximation algorithms for
computing $drp(s,t)$.

\medskip

A $drp(s,t)$ is said to be {\it constrained} if (i) it is simple, (ii) it 
intersects only the {\it eaves} of $SP(s,t)$, and (iii) it intersects 
each eave exactly once. Such a path is denoted by $cdrp(s,t)$; see 
Figure~\ref{constrainedpath} for an illustration. 
If every $drp(s,t)$ intersects itself or intersects a non-eave edge of
$SP(s,t)$, then there cannot exist a $cdrp(s,t)$. 
In Figure~\ref{constrainedpath1}, any $cdrp(s,t)$ has to enter triangle 
$abc$ from $ab$ and exit from $bc$. Such a $cdrp(s,t)$ must end up on the 
clockwise polygonal boundary 
from $f$ to $g$ and therefore, can not reach $t$, where $f$ is the
extension of $ab$ to $bd(P)$ and $g$ is the extension of $bc$ to $bd(P)$. 

If a $cdrp(s,t)$ has the minimum number of turns amongst all $cdrp(s,t)$, then
it is denoted as $mcdrp(s,t)$. Ghosh \cite{g-cvpcs-91,ghosh-book-2007} showed
how $SP(s,t)$ can be transformed into  a $mlp(s,t)$  in $P$. Following a similar
approach as shown in the sequel, $SP(s,t)$ can also be transformed to a
$cdrp(s,t)$ (if it exists).
In addition, they all have the same number of reversals of turns from $s$ to $t$
only at eaves; the turns immediately before and after each link crossing an
eave have reverse directions, just as the turns at the two end of an eave in
$SP(s,t)$ have reverse directions.
Thus, one can observe that the particular diffuse reflection
path we are interested in has significant structural similarities with
the Euclidean shortest paths and the minimum link paths.

\begin{figure*}[ht]
\begin{minipage}[b]{1.0\textwidth}
\centering
\includegraphics[width=0.6\columnwidth]{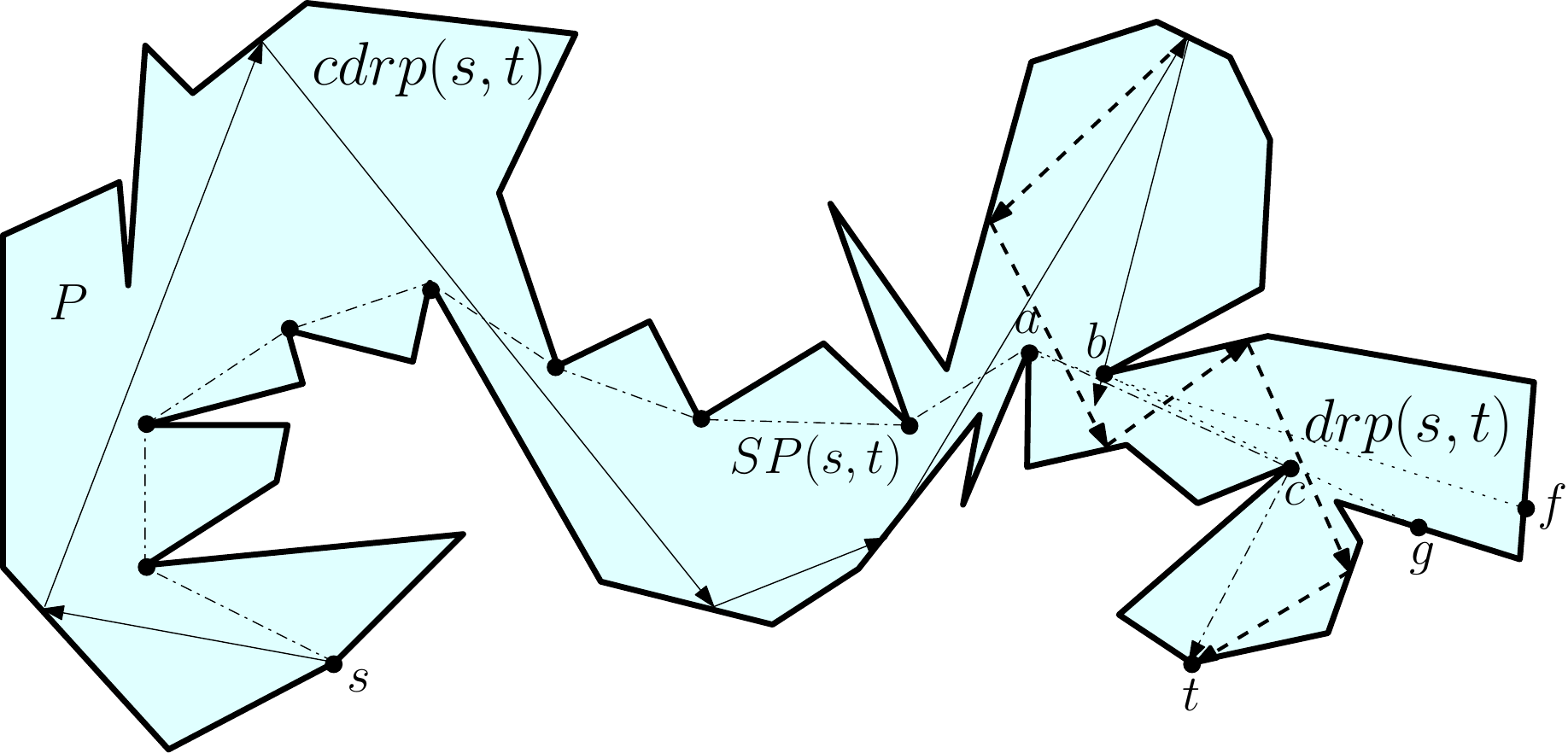}\\
\captionof{figure}{There is no $cdrp(s,t)$ as any path must 
intersect non-eave edge of $SP(s,t)$.}
\label{constrainedpath1}
\end{minipage}
\end{figure*}
\begin{figure*}[ht]
 \begin{minipage}[b]{1.0\textwidth}
  \centering
 \includegraphics[width=0.5\columnwidth]{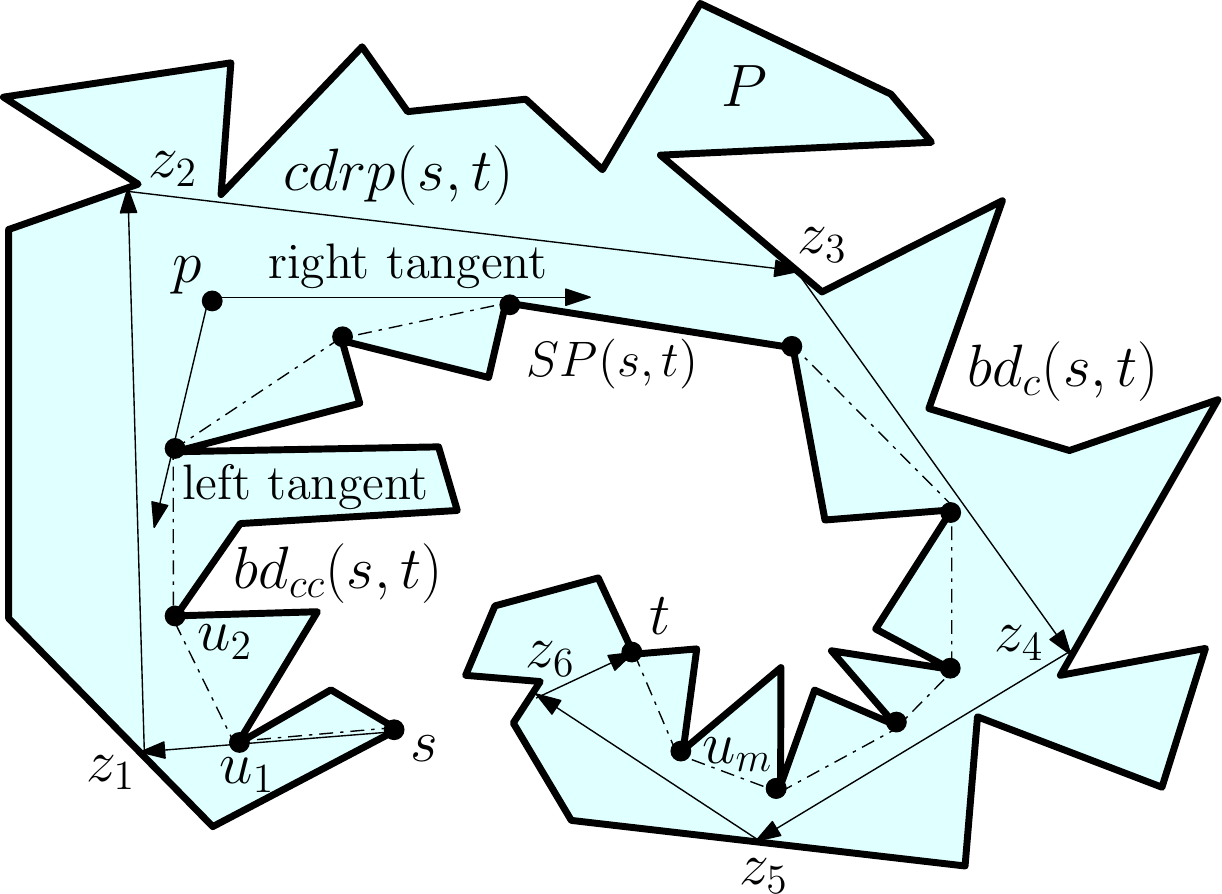}\\
\captionof{figure}{Every $cdrp(s,t)$ is a convex and simple path.}
\label{eaveno1}
\end{minipage}
\end{figure*}

\medskip
Section~\ref{sec:prevresult} reviews previous work on visibility 
with multiple reflections. 
The main goal of this paper is to compute an optimal constrained diffuse
reflection path $mcdrp(s,t)$ that is dealt with in Section
\ref{sec:mcdrp}. We present an $O(n(n+\beta))$ time algorithm for 
computing $mcdrp(s,t)$. Here, $\beta = \Theta (n^2)$ depends on the 
shape of the polygon. To the best of our knowledge, this is the first 
attempt at computing any class of optimal diffuse reflection path. 
Section~\ref{sec:explore} relates $mcdrp(s,t)$ with $drp(s,t)$ and 
Section~\ref{sec:conclude} concludes the paper.

\medskip

\section{Previous results}
\label{sec:prevresult}
We state some known results on visibility with multiple reflections.
In~\cite{addpp-vr-95}, Aronov et al.~studied the region visible from a point
source inside a simple $n$-vertex polygon where at most one specular 
(or diffuse) reflection is permitted on the bounding edges. They established 
a tight $\Theta(n^2)$ worst-case combinatorial complexity bound for 
the region visible after at most one reflection. They also proposed an
algorithm for computing such regions in $O(n^2\log^2 n)$ time for both 
specular as well as diffuse reflections. Aronov et al.~\cite{addpp-vmr-96}
addressed the general problem where at most $k\geq 2$ specular reflections are
used. An upper bound of $O(n^{2k})$ and a worst-case lower bound of
$\Omega((n/k)^{2k})$ was established on the combinatorial complexity of the
region visible due to at most a constant number $k$ of {\it specular}
reflections. They also proposed an algorithm running in $O(n^{2k}\log n)$ 
time, for $k>1$.
\remove{
}
Davis \cite{dr-vlt-1998} studied several variations of 
reflection problems.

\medskip

Prasad et al.~\cite{ppd-vmdr-98} showed that the upper bound on the 
number of edges and vertices of the region visible due to at most $k$
diffuse reflections is $O(n^{2\lceil (k+1)/2\rceil +1})$.
They designed an $O(n^{2\lceil (k+1)/2\rceil +1}\log n)$ time
algorithm for computing such a visible region. In~\cite{ppd-vmdr-98} 
they conjectured that the complexity of the 
region visible due to at most $k$ diffuse reflections is 
$\Theta (n^2)$. Note that this region may contain blind spots or holes (see
in \cite{pbs-lwlb-04}). Aronov et al.~\cite{adiy-cdfs-2006} 
claimed that the complexity of this visible region  
is $O(n^9)$. Bridging the big gap between the $O(n^9)$ upper bound 
of Aronov et al. \cite{adiy-cdfs-2006}, and
the $\Omega (n^2)$ lower bound as in~\cite{ppd-vmdr-98}, is an 
open problem.
 
\medskip

Recently, Ghosh et al.~\cite{ggmnps-acdr-2012} have presented
three different algorithms for computing sub-optimal diffuse reflection
paths from $s$ to $t$ inside $P$. For constructing such a path, 
the first algorithm uses a greedy method, the second algorithm uses a 
transformation of  a minimum link path, and the 
third algorithm uses the edge-edge visibility graph of $P$. The first two
algorithms are for simple polygons, and they run in $O(n + k \log n)$ time,
where $k$ denotes the number of reflections in the constructed path. 
The third algorithm runs in $O(n^2)$ time and works for
polygons with or without holes. 
The number of reflections in the path produced by the third algorithm can be at
most three times that in an optimal diffuse reflection path.

\section{Computing an optimal constrained diffuse reflection path}
\label{sec:mcdrp}
\medskip
Let $SP(s,t)=(s,u_1,u_2,\ldots, u_m,t)$, where $u_1,u_2,\ldots, u_m$ are
vertices of $P$. We know that $SP(s,t)$ can be computed in linear
time~\cite{lp-net-84}. Since $cdrp(s,t)$ intersects $SP(s,t)$ 
only at the eaves, all other edges of $SP(s,t)$ can be treated as polygonal
edges. So, we assume without loss of generality, that 
$s$ and $t$ lie on $bd(P)$. We first present an
algorithm for computing a $mcdrp(s,t)$ for the special case where no edge of
$SP(s,t)$ is an eave (see Figure \ref{eaveno1}). Later we allow eaves in
$SP(s,t)$. 

\subsection{$SP(s,t)$ has no eave}
\label{ssec:noeave}

\subsubsection{Characterizing the path $cdrp(s,t)$}
\label{sssec:path-noeave-char}

\medskip

We know that $SP(s,t)$ is either right convex or left convex as there 
is no eave. Without loss of generality, we assume that $SP(s,t)$ makes a right
turn at every vertex of the path, while traversing from $s$ to $t$. 
So, vertices of
$SP(s,t)$ belong to the counterclockwise boundary of $P$ from $s$ to $t$
(denoted as $bd_{cc}(s,t)$). Since a $cdrp(s,t)$ does not intersect any edge of
$SP(s,t)$, all turning points of the $cdrp(s,t)$ lie on the clockwise boundary
of $P$ from $s$ to $t$ (denoted as $bd_c(s,t)$). We have the following lemma
that establishes the convexity property of $cdrp(s,t)$, as we have for $SP(s,t)$
and $mlp(s,t)$.
\begin{lem}
Every $cdrp(s,t)$ is a convex and simple path.
\label{lemma:cdrpconvexsimple}
\end{lem}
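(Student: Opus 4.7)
The plan is to handle simplicity and convexity separately. Simplicity is immediate, since condition~(i) in the definition of a $cdrp$ already requires the path to be simple; the only real content of the lemma is convexity. I set up the picture as follows: with no eaves in $SP(s,t)$, the path $SP(s,t)$ is entirely right-convex with all of its vertices on $bd_{cc}(s,t)$, while every turning point of $cdrp(s,t)$ must lie on $bd_c(s,t)$. Because there are no eaves to cross, the $cdrp$ cannot meet $SP(s,t)$ at all, so $cdrp(s,t)$ lies in the simply connected subpolygon $Q$ whose boundary is $SP(s,t)\cup bd_c(s,t)$.

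The first substantive step is an ordering property: if $p_1,p_2,\ldots,p_k$ are the turning points listed in the order the path visits them from $s$ to $t$, then they appear in the same order when $bd_c(s,t)$ is traversed from $s$ to $t$. I would prove this by contradiction, arguing that any out-of-order pair would force two chords $p_{i-1}p_i$ and $p_{j-1}p_j$ of the $cdrp$ to cross inside the simply connected region $Q$, violating the simplicity given by condition~(i) (the alternative, using $SP(s,t)$ as a detour to avoid the crossing, is blocked by the containment of the $cdrp$ in $Q$).

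Next, I would show that at every interior turning point $p_i$ the turn is in a consistent direction matching the right-convexity of $SP(s,t)$. Both the incoming segment $p_{i-1}p_i$ and the outgoing segment $p_ip_{i+1}$ leave $p_i$ on the interior side of the edge $e_i$ of $bd_c$ supporting $p_i$, and by the ordering property, $p_{i-1}$ lies on an edge of $bd_c$ strictly earlier than $e_i$ while $p_{i+1}$ lies on a strictly later edge. These constraints, combined with the right-convexity of the opposite boundary $SP(s,t)$ of $Q$, pin the two chord directions at $p_i$ into a wedge whose geometry forces the turn from $p_{i-1}p_i$ to $p_ip_{i+1}$ to be a right turn.

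The step I expect to be the main obstacle is this last one, because the order of $p_{i-1}$ and $p_{i+1}$ along $bd_c(s,t)$ does not by itself determine the precise chord directions at $p_i$. The approach I would try is to extend each chord $p_ip_{i-1}$ and $p_ip_{i+1}$ until it meets $bd(Q)$, and to use the fact that these extensions also respect the $bd_c$-ordering established in the second step; this reduces the turn direction at $p_i$ to a statement about two chords of a simply connected region whose boundary is convex on one side, and a separating-line argument then delivers uniform right-convexity of the whole $cdrp(s,t)$.
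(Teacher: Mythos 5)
Your proposal follows essentially the same route as the paper's proof: simplicity is read off from condition~(i) of the definition, the turning points are shown to occur in clockwise order along $bd_c(s,t)$ by combining simplicity with the fact that the path never meets $SP(s,t)$, and the right turn at each turning point is then deduced from its lying in the interior of an edge of $bd_c(s,t)$ with its predecessor earlier and its successor later along that boundary. The step you flag as the main obstacle is precisely where the paper's own proof is terse---it simply asserts the right-turn conclusion from these facts---so your sketched wedge/extension argument is, if anything, more explicit than the published one rather than a different approach.
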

\begin{proof}
Consider a $cdrp(s,t)=(s,z_1,\ldots,z_p,t)$ where $z_i$, 
$1 \leq i \leq p$, are the
turning points of the $cdrp(s,t)$ on $bd(P)$. 
Since the path is simple by definition, the next turning point
$z_{i+1}$ of $z_i$ cannot belong to $bd_c(s,z_i)$, for all $i$.
Further, since $cdrp(s,t)$ does not intersect any edge of $SP(s,t)$, 
$z_{i+1}$ must belong to $bd_c(z_i,t)$. 
Since each turning point $z_i$ of the $cdrp(s,t)$ is an
interior point of an edge of $bd_c(s,t)$,
the $cdrp(s,t)$ makes a right turn at $z_i$, for all $i$. 
Therefore, $cdrp(s,t)$ is a simple and convex path.
\end{proof}
\begin{corr}
The turning points $z_1, z_2,\ldots, z_p$ of 
$cdrp(s,t)=(s=z_0,z_1,\ldots,...,z_p,z_{p+1}=t)$ appear in clockwise order
along $bd_c(s,t)$. 
\label{corollary:sequence}
\end{corr}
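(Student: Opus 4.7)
The plan is to extract the corollary almost immediately from the argument already made inside the proof of Lemma~\ref{lemma:cdrpconvexsimple}. Recall that the lemma established two things about a generic turning point $z_i$ of a $cdrp(s,t)$: first, by simplicity, $z_{i+1}$ cannot lie on $bd_c(s,z_i)$; second, because the path avoids all edges of $SP(s,t)$ (all of which lie on $bd_{cc}(s,t)$ except the endpoints $s,t$), $z_{i+1}$ must lie on $bd_c(z_i,t)$. These two facts together, applied at each step, are exactly the statement of the corollary.

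First I would restate the key containment $z_{i+1}\in bd_c(z_i,t)$ for each $0\le i\le p$, quoting the lemma. Then I would interpret this as an order relation: parameterise $bd_c(s,t)$ from $s$ to $t$ in the clockwise direction and let $\sigma(z)$ denote the position of a point $z$ along this parameterisation; the containment $z_{i+1}\in bd_c(z_i,t)$ is equivalent to $\sigma(z_i)<\sigma(z_{i+1})$. Finally, a one-line induction on $i$ chains these inequalities into $\sigma(z_0)<\sigma(z_1)<\cdots<\sigma(z_{p+1})$, which is precisely the claim that $z_1,\dots,z_p$ appear in clockwise order along $bd_c(s,t)$ between $s$ and $t$.

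Since all turning points are already known to lie on $bd_c(s,t)$ (by the lemma), and the strict order is forced edge by edge, I do not expect any real obstacle. The only subtlety worth noting explicitly in the write-up is that the inequalities are strict: two consecutive turning points cannot coincide, because $z_{i+1}\ne z_i$ is implied by the fact that $z_i$ is an interior point of a boundary edge and the path makes a genuine right turn there, as was derived in the proof of Lemma~\ref{lemma:cdrpconvexsimple}.
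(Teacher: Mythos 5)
Your proposal is correct and follows essentially the same route as the paper: the corollary is stated there without a separate proof precisely because, as you observe, the containment $z_{i+1}\in bd_c(z_i,t)$ established inside the proof of Lemma~\ref{lemma:cdrpconvexsimple} chains immediately into the clockwise ordering. Your explicit parameterisation and the remark on strictness are harmless elaborations of what the paper leaves implicit.
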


For any point $p$ inside $P$, we say that the line segment 
$pu_i$ is a {\it left (or, right) tangent} from $p$ to $SP(s,t)$
at the vertex $u_i$ (see Figure \ref{eaveno1}), 
if both $u_{i-1}$ as well as $u_{i+1}$,
lie to the left (respectively, right) of the ray emanating from
$p$ through $u_i$. Note that $u_{i-1}$, $u_i$ and $u_{i+1}$ are 
consecutive vertices of $SP(s,t)$. We have the
following lemma.

\begin{lem}
The right and left tangents from any turning point $z_i$ of
a $cdrp(s,t)$ to $SP(s,t)$ lie entirely inside the simple polygon $P$.
\label{lemma:tangents}
\end{lem}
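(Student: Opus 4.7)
The plan is to establish the lemma in two steps, using the convexity of $SP(s,t)$ together with the sub-polygon structure on the $bd_c(s,t)$ side. I focus on the right tangent $z_i u_j$; the left case is symmetric.

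First, since we are in the no-eave case, $SP(s,t)$ is without loss of generality right-convex and all its interior vertices lie on $bd_{cc}(s,t)$, while $z_i \in bd_c(s,t)$. The right-tangent hypothesis puts $u_{j-1}$ and $u_{j+1}$ strictly on the right of the ray from $z_i$ through $u_j$. I would then propagate this condition outward along the chain by a short induction: at every vertex $u_{j+k}$ (for $k \geq 1$), the right-turn property of $SP(s,t)$ forces $u_{j+k+1}$ to lie on the same side of the line through the two previous vertices, and combining this with the inductive hypothesis places $u_{j+k+1}$ on the right of the ray through $z_i$ and $u_j$; the analogous argument handles $u_{j-1}, u_{j-2}, \ldots, s$. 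Hence every vertex $s = u_0, u_1, \ldots, u_m, u_{m+1} = t$ lies on or to the right of this ray, and the open segment $z_i u_j$ is disjoint from the polyline $SP(s,t)$.

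Second, let $P'$ denote the sub-polygon of $P$ bounded by $SP(s,t)$ and $bd_c(s,t)$, so that $z_i, u_j \in bd(P')$ and $P' \subseteq P$. Because $z_i u_j$ does not cross $SP(s,t)$ and $u_j$ lies on the portion of $bd(P')$ opposite to $z_i$, the segment enters the interior of $P'$ immediately after $z_i$, and the only way it can leave $P$ is through $bd_c(s,t)$. I would rule this out by contradiction: if $z_i u_j$ first exited $P$ at some $y \in bd_c(s,t)$ and re-entered at a later point $y' \in bd_c(s,t)$, then the sub-chain $bd_c(y, y')$ together with the chord $y y'$ would bound a pocket outside $P$ containing a reflex vertex $w$ of $P$ whose exterior wedge straddles $z_i u_j$. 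Combining Lemma~\ref{lemma:cdrpconvexsimple} (which forces $cdrp(s,t)$ to be a convex and simple path with turning points in clockwise order along $bd_c(s,t)$) with the shortest-path optimality of $SP(s,t)$, I would derive a contradiction from the presence of such a blocker $w$.

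I expect the main obstacle to be this final step: rigorously ruling out the exit of $z_i u_j$ through $bd_c(s,t)$. For an arbitrary point on $bd_c(s,t)$, such an exit could in principle occur, so the argument must crucially use the hypothesis that $z_i$ is a turning point of a valid $cdrp(s,t)$, rather than an arbitrary point, exploiting the convex and simple reflection structure around $z_i$ to forbid the blocker $w$.
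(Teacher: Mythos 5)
Your two-step decomposition is in the right spirit: the paper's own proof is a one-line appeal to Lemma~\ref{lemma:cdrpconvexsimple}, i.e., to the convexity and simplicity of the $cdrp(s,t)$, and your Step~1 (the segment $z_iu_j$ does not cross $SP(s,t)$) together with the observation that the only remaining way to leave $P$ is through $bd_c(s,t)$ is the correct skeleton. The difficulty is that you stop exactly at the decisive point. You write that you ``would derive a contradiction from the presence of such a blocker $w$'' using convexity, simplicity and shortest-path optimality, but no derivation is given, and you yourself flag this step as the main obstacle. As it stands, the proposal establishes only that $z_iu_j$ avoids the chain $SP(s,t)$, not that it lies inside $P$, and the latter is the entire content of the lemma; so there is a genuine gap, and it sits precisely where the paper's terse proof also asks the reader to do the work.

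The missing step follows from simplicity more directly than from shortest-path optimality or from a reflex blocker vertex. If $z_iu_j$ left $P$ through $bd_c(s,t)$, some point $y$ of $bd_c(s,t)$ would lie strictly between $z_i$ and $u_j$ on the segment. By Corollary~\ref{corollary:sequence}, the turning points occur in clockwise order along $bd_c(s,t)$, so $y\in bd_c(z_l,z_{l+1})$ for some consecutive pair (or $y$ lies in $bd_c(s,z_1)$ or $bd_c(z_p,t)$). The corresponding link $z_lz_{l+1}$ is a chord of $P$ that does not cross $SP(s,t)$, so it separates the pocket containing $bd_c(z_l,z_{l+1})$, and hence $y$, from $u_j$ and from every turning point other than $z_l,z_{l+1}$; a straight segment from $z_i$ to $u_j$ would then have to cross this single chord twice in order to visit $y$ and return, which is impossible. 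Two further cautions: first, your Step~1 induction is not airtight as stated, because a right-turning chain need not stay in one half-plane of an arbitrary line through one of its vertices (a spiralling $SP(s,t)$ can turn by more than $2\pi$), so the side condition does not propagate from the right-turn property alone; second, invoking shortest-path optimality of $SP(s,t)$ is a red herring here, since the obstruction argument lives entirely on the $bd_c(s,t)$ side and uses only the simplicity and non-crossing properties of the $cdrp(s,t)$ guaranteed by Lemma~\ref{lemma:cdrpconvexsimple}.
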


\begin{proof}
The proof follows from Lemma \ref{lemma:cdrpconvexsimple} due to the convexity
and simplicity of the $cdrp(s,t)$.
\remove{
}
\end{proof}

\begin{figure*}[ht]
\begin{minipage}[b]{1.0\textwidth}
\centering
\includegraphics[width=0.6\columnwidth]{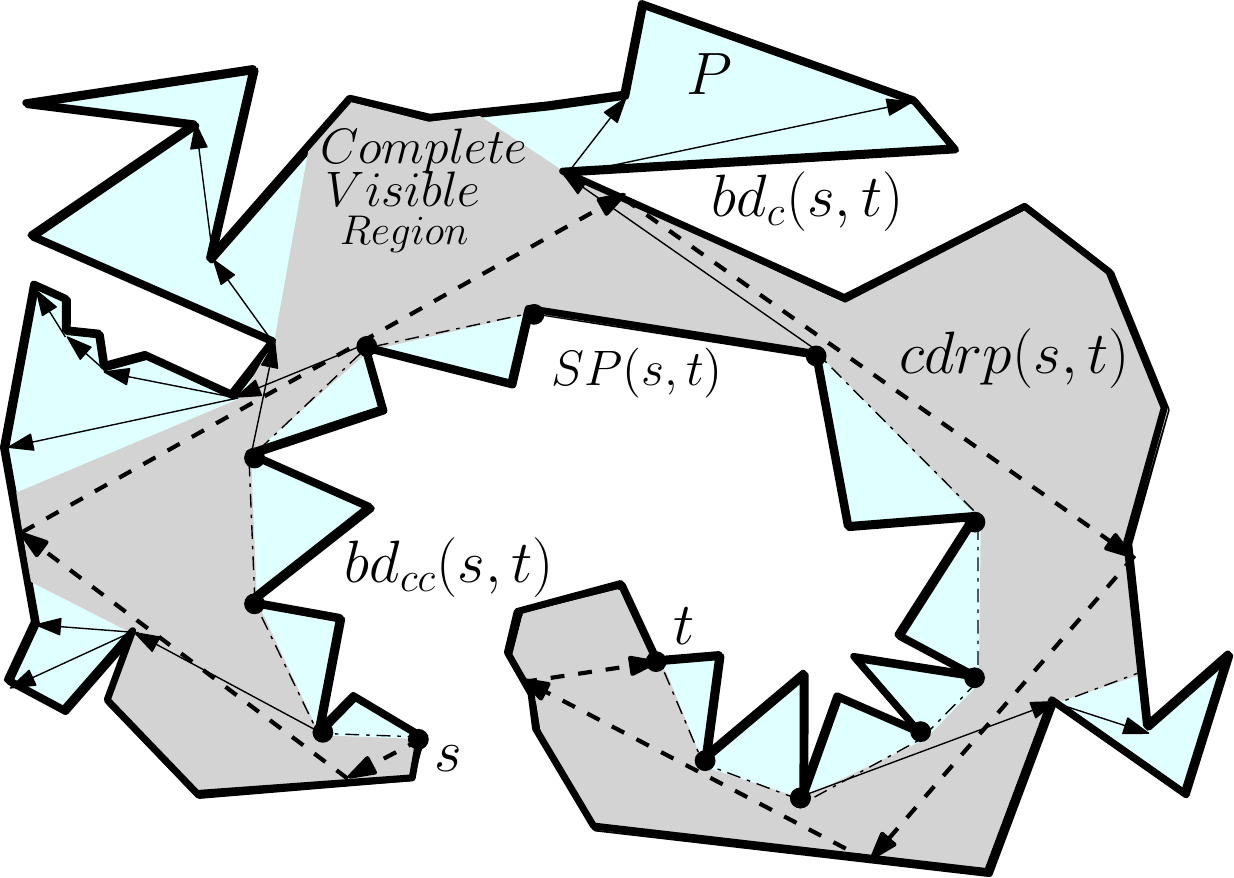}\\
\captionof{figure}{Every $cdrp(s,t)$ lies inside $CV(s,t)$.}
\label{eaveno2}
\end{minipage}
\end{figure*}

Let $CV(s,t)$ be the {\it complete visible} region of $P$ bounded by 
$SP(s,t)$ and $bd_c(s,t)$ such that the right and left tangents from 
every point $z$ of $CV(s,t)$ to $SP(s,t)$, lie inside $P$ (see Figure
\ref{eaveno2}). It follows from Lemma \ref{lemma:tangents} that a 
$cdrp(s,t)$ lies totally inside $CV(s,t)$ with turning points on the 
polygonal edges belonging to $CV(s,t)$ as stated in the following lemma.
\begin{lem}
Every $cdrp(s,t)$ lies entirely inside $CV(s,t)$.
\label{lemma:cvcdrp}
\end{lem}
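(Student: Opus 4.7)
The plan is to verify the two defining conditions of $CV(s,t)$ at every point of a $cdrp(s,t) = (s=z_0, z_1, \ldots, z_p, z_{p+1}=t)$: that the point lies inside the region of $P$ bounded by $SP(s,t)$ and $bd_c(s,t)$, and that both its right and left tangents to $SP(s,t)$ lie inside $P$.

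The turning points are handled immediately. The endpoints $s$ and $t$ lie on $SP(s,t)$ itself, so both conditions are trivial, while for each interior turning point $z_i \in bd_c(s,t)$ the tangent property is exactly the content of Lemma \ref{lemma:tangents}. For non-turning points lying on a link $z_i z_{i+1}$, containment in the region bounded by $SP(s,t)$ and $bd_c(s,t)$ follows because the link cannot cross $SP(s,t)$ (no edge of $SP(s,t)$ is an eave in this subsection) and cannot cross $bd_c(s,t)$ strictly between $z_i$ and $z_{i+1}$, thanks to the simplicity of the path (Lemma \ref{lemma:cdrpconvexsimple}) combined with the clockwise ordering of turning points on $bd_c(s,t)$ (Corollary \ref{corollary:sequence}).

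The main obstacle, on which I would concentrate, is verifying the tangent property for a generic non-turning point $z$ in the interior of a link $z_i z_{i+1}$. My approach is to enclose any candidate tangent segment $z u_j$ inside a region $R_i$ whose entire boundary already lies in $P$, and then argue that $R_i$ itself lies in $P$. Letting $u_\alpha$ and $u_\beta$ be the right tangent vertices of $SP(s,t)$ from $z_i$ and $z_{i+1}$ respectively (and analogously for the left tangents), I would take $R_i$ to be the closed region bounded by the tangents $z_i u_\alpha$ and $z_{i+1} u_\beta$ (which lie in $P$ by Lemma \ref{lemma:tangents}), the link $z_i z_{i+1}$ (which is part of $cdrp(s,t)$ and therefore lies in $P$), and the sub-chain of $SP(s,t)$ from $u_\alpha$ to $u_\beta$ (which is inside $P$ trivially).

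The delicate step is to rule out that any portion of $bd(P)$ intrudes into the interior of $R_i$. I would handle this by using the convexity of the sub-chain of $SP(s,t)$ from $u_\alpha$ to $u_\beta$ together with the convexity of $cdrp(s,t)$ (Lemma \ref{lemma:cdrpconvexsimple}) to argue that $R_i$ is a simply connected region disjoint from $bd(P)$ except along its own boundary, whence $R_i \subseteq P$. With this in hand, the tangent vertex from $z$, which varies monotonically along $SP(s,t)$ as $z$ slides from $z_i$ to $z_{i+1}$, lies on the sub-chain from $u_\alpha$ to $u_\beta$, so the right tangent segment from $z$ is contained in $R_i$ and hence in $P$; the left tangent is handled symmetrically. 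Together these establish condition (ii) for every point of the link, completing the proof.
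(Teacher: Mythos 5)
Your route is essentially the paper's for the part the paper actually argues, but it goes genuinely further. The paper dispatches this lemma in one sentence, saying it ``follows from Lemma~\ref{lemma:tangents}'': it certifies only the turning points (which lie on $bd_c(s,t)$ and have both tangents inside $P$, hence lie in $CV(s,t)$ by definition) and leaves the points interior to the links entirely implicit. You treat the turning points the same way, and your additional work for link interiors --- enclosing the candidate tangent segments in the region $R_i$ bounded by the two extreme tangents $z_iu_\alpha$, $z_{i+1}u_\beta$, the link, and the subchain of $SP(s,t)$, and deducing $R_i\subseteq P$ because a closed curve lying in the simply connected set $P$ encloses only points of $P$ --- is a legitimate way to fill exactly the step the paper skips. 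The one place you should shore up is the final claim that the right tangent segment $zu_j$ from an interior link point stays inside $R_i$: monotonicity of the tangent vertex only places $u_j$ on the subchain between $u_\alpha$ and $u_\beta$, and since the vertices of that subchain are reflex vertices of $R_i$ (the chain bulges toward the link), $R_i$ is not convex, so you still have to exclude the possibility that $zu_j$ escapes through the subchain or through one of the two bounding tangents before reaching $u_j$; this needs the tangency condition and the convexity of $SP(s,t)$ explicitly, not just the location of $u_j$. An alternative way to close the argument (and arguably a shorter proof of the whole lemma) is a pocket argument: the region $Q$ bounded by $SP(s,t)$ and $bd_c(s,t)$ minus $CV(s,t)$ consists of pockets, each cut off by a single straight constructed edge (this is the structure used to compute $CV(s,t)$ in linear time); a link has both endpoints in $CV(s,t)$ by Lemma~\ref{lemma:tangents}, cannot cross $SP(s,t)$ (no eaves here) or $bd(P)$, and a straight segment can cross a pocket's single bounding chord at most once, so no link can enter and leave a pocket, whence the entire path stays in $CV(s,t)$.
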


\begin{figure*}[ht]
 \begin{minipage}[b]{1.0\textwidth}
  \centering
 \includegraphics[width=0.6\columnwidth]{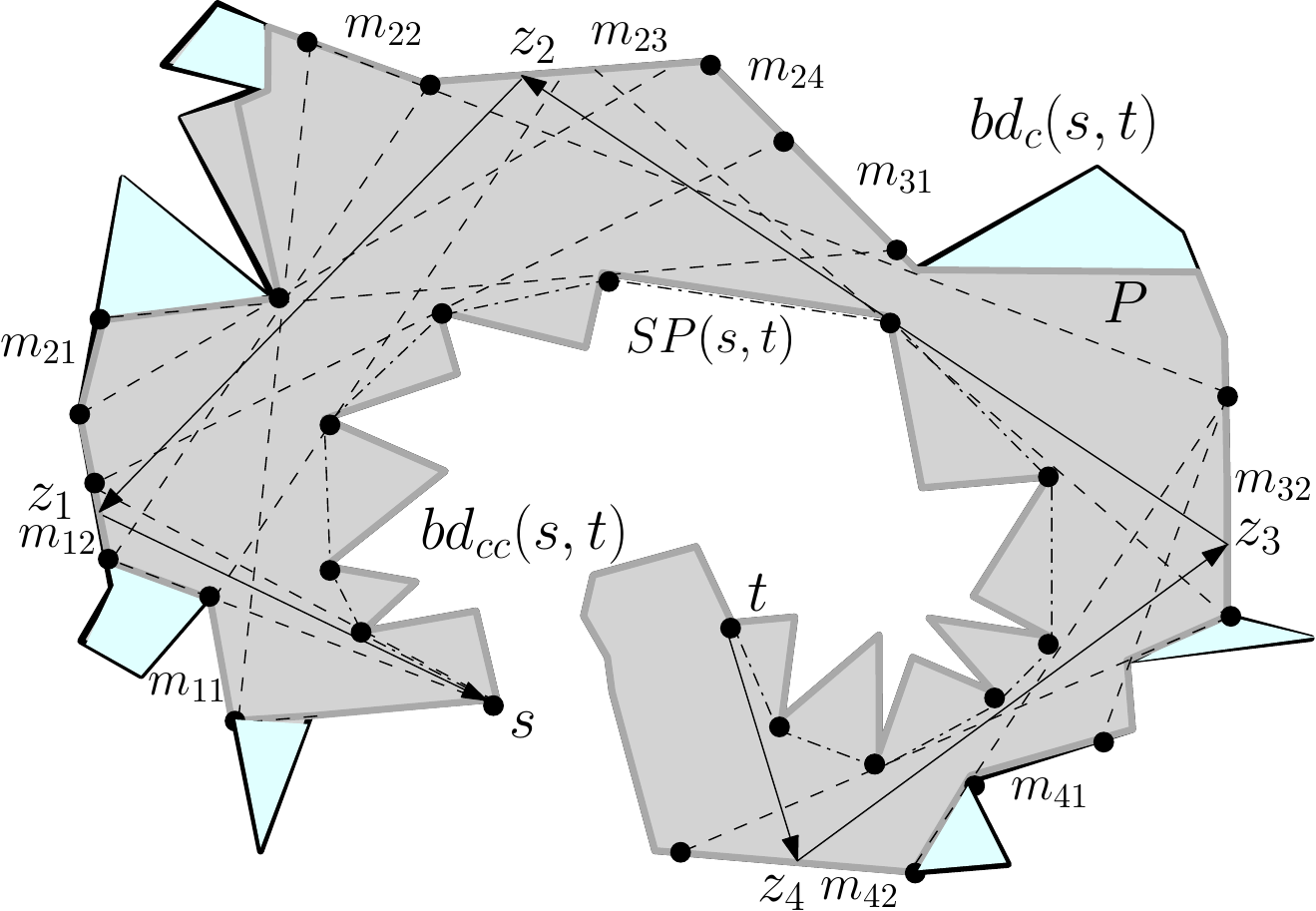}\\
\captionof{figure}{Mirrors $M_1=(m_{11},m_{12})$, $M_2=(m_{21},m_{22},m_{23},m_{24})$,
$M_3=(m_{31},m_{32})$, $M_4=(m_{41},m_{42})$ are computed in clockwise order along
$BCV(s,t)$.}
\label{lmirror1}
\end{minipage}
\end{figure*}
The region $CV(s,t)$ can be computed in linear time by traversing 
the shortest path trees inside $P$ rooted at $s$ and $t$ as given in 
\cite{cgmrs-ncml-95,g-cvpcs-91,ghosh-book-2007}. A {\it shortest path tree} 
rooted at a vertex $v$ is the union of the Euclidean shortest paths from 
$v$ to all vertices of $P$. The main property used by the algorithm in
traversing the trees is given below.

\begin{lem}\cite{cgmrs-ncml-95,g-cvpcs-91}
Let $u$ and $w$ be the parents of a vertex $v \in bd_{c}(s,t)$
in the shortest path trees inside $P$ rooted at $s$ and $t$. The vertex $v$
belongs to $CV(s,t)$ if and only if both $u$ and $w$ belong to
$SP(s,t)$\remove{(see Figure~\ref{eaveno2})}. 
\label{lemma:sptst}
\end{lem}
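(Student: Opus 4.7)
My plan is to prove both directions of the biconditional by connecting the parents in the shortest path trees with the tangent points from $v$ to the convex chain $SP(s,t)$. Throughout, I use that in the no-eave case $SP(s,t)$ is right-convex with all turn vertices $u_1,\dots,u_m$ lying on $bd_{cc}(s,t)$, while $v$ lies on the opposite chain $bd_c(s,t)$; in particular, the relevant shortest paths $SP(s,v)$ and $SP(t,v)$ lie inside the subpolygon of $P$ bounded by $SP(s,t)$ and $bd_c(s,t)$.

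For the ``if'' direction, assume $u=u_i\in SP(s,t)$ is the parent of $v$ in the shortest path tree rooted at $s$. Then $SP(s,v)$ is the prefix $(s,u_1,\dots,u_i)$ of $SP(s,t)$ followed by the single segment $u_iv$. I would argue that the edge $u_iv$ coincides with the right tangent from $v$ to $SP(s,t)$ at $u_i$: since $SP(s,v)$ is taut and turns at the reflex vertex $u_i$, the segment $u_iv$ must leave $u_i$ in a direction that places $u_{i+1}$ on the same side of the ray $v\to u_i$ as $u_{i-1}$; otherwise $u_iv$ would cross $SP(s,t)$ near $u_{i+1}$, or one could shortcut via $u_{i+1}$, contradicting minimality. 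Hence the right tangent at $u_i$ is the edge $u_iv$ of the tree rooted at $s$ and lies in $P$. The symmetric argument with the tree rooted at $t$ shows $w$ is the touch point of the left tangent from $v$. Both tangents lying in $P$ yields $v\in CV(s,t)$ by definition.

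For the converse, suppose $v\in CV(s,t)$. Let $u'$ be the right tangent point from $v$ to $SP(s,t)$, so $u'v\subset P$ by hypothesis. I would show that $SP(s,u')\cdot u'v$ is $SP(s,v)$, and hence $u=u'\in SP(s,t)$. The tangency condition places all of $SP(s,t)$ on one side of the supporting line through $v$ and $u'$; a path that wraps further along $SP(s,t)$ beyond $u'$ and then reaches $v$ is strictly longer by the triangle inequality, while any geodesic detour through a reflex vertex of $bd_c(s,t)$ is ruled out by unfolding the funnel from $s$ to $v$ along the convex chain. A symmetric argument gives $w\in SP(s,t)$.

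The main obstacle, I expect, is the converse direction: formally ruling out geodesic detours through reflex vertices of $bd_c(s,t)$ once $u'v\subset P$ is given. I plan to dispatch this with the standard funnel-unfolding technique, where the apex of the funnel from $s$ to $v$ along the convex chain $SP(s,t)$ is exactly $u'$ whenever $u'v$ lies in $P$; this collapses $SP(s,v)$ to $SP(s,u')\cdot u'v$. The same unfolding argument also underpins the tangency claim invoked in the forward direction, so both halves of the proof ultimately rest on a single geometric core.
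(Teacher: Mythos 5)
The paper does not prove this lemma at all: it is imported verbatim from the cited works \cite{cgmrs-ncml-95,g-cvpcs-91}, so there is no in-paper argument to compare yours against. Judged on its own, your central idea -- identify the last edge of $SP(s,v)$ (resp.\ $SP(t,v)$) with one of the two tangents from $v$ to $SP(s,t)$, so that ``parent on $SP(s,t)$'' becomes ``tangent segment lies in $P$'' -- is the right one and is essentially how the complete-visibility region is characterized in the cited literature. Two cautions on the forward direction. First, with the paper's orientation convention (right-convex $SP(s,t)$ on $bd_{cc}(s,t)$, $v$ on $bd_c(s,t)$), the tree edge from the root $s$ is the \emph{left} tangent and the one from $t$ is the \emph{right} tangent, the opposite of your labels; this is cosmetic. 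More substantively, tautness of $SP(s,v)$ at $u_i$ only pins down the side of $u_{i-1}$; to get $u_{i+1}$ on the same side of the ray from $v$ through $u_i$ you need, in addition, that the segment $u_iv$ leaves $u_i$ into the region bounded by $SP(s,t)$ and $bd_c(s,t)$, which follows from the fact that two shortest paths from the common root $s$ cannot cross after diverging (otherwise $u_iv$ would start on the far side of $SP(s,t)$ and would have to cross it to reach $v\in bd_c(s,t)$). Your one-line justification gestures at both ingredients but does not pin either down; also the degenerate cases (parent equal to $s$ or $t$) deserve a sentence.

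The genuine gap is in the converse. You assert that the tangency condition ``places all of $SP(s,t)$ on one side of the supporting line through $v$ and $u'$,'' but the paper's tangent is defined only by the two neighbours $u_{i-1},u_{i+1}$, and a right-convex chain inside a simple polygon may turn by more than $\pi$ (the spiral polygons of Section~\ref{ssection:diameter} are exactly such examples), so a local tangent line is in general \emph{not} a global supporting line of $SP(s,t)$; the triangle-inequality step built on it is therefore unsound as stated. The clean repair is the one you mention almost in passing: tangency at $u'$ together with $u'v\subset P$ makes the concatenation $SP(s,u')\cdot u'v$ taut (locally geodesic) at every turn, and in a simply connected polygon a taut path between two points is the unique shortest path; hence $SP(s,v)=SP(s,u')\cdot u'v$ and the parent of $v$ is $u'\in SP(s,t)$. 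That uniqueness-of-taut-paths fact (or an explicit funnel argument) should be stated and used as the engine of both directions, rather than the global-support claim; with that substitution your outline goes through.
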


\subsubsection{Computing the reflecting edges for $cdrp(s,t)$}
\label{sssec:path-noeave-mirror}
The above discussion shows that all the turning points of a $cdrp(s,t)$
must lie on edges of $bd_c(s,t)$ that also belong to $CV(s,t)$. 
We denote the sequence of such intervals on polygonal edges 
as $BCV(s,t)$ $=$ $bd(P) \cap CV(s,t)$. We refer to polygonal edges 
containing edges of $BCV(s,t)$ as {\it reflecting} edges. Let us first identify
intervals of $BCV(s,t)$ on reflecting edges that can have 
the first turning point of a $cdrp(s,t)$. Let $M_1=(m_{11},m_{12},\ldots)$ 
be the intervals visible from $s$ on reflecting edges in clockwise order along
$bd_c(s,t)$ (see Figure \ref{lmirror1}). We refer to these intervals as 
{\it mirrors} of $M_1$. 
If $t$ is visible from any point $z_1$ on a mirror of $M_1$, then 
$(s,z_1,t)$ is a $cdrp(s,t)$.

\medskip

Assume that $t$ is not visible from any mirror of $M_1$. We identify 
intervals on reflecting edges such that every point in any such 
interval is visible from some point in a mirror of $M_1$.  
Note that a point on a reflecting edge may be visible from points in 
two or more mirrors of $M_1$. For each reflecting edge, the union of 
such intervals gives disjoint intervals on that reflecting edge. Let
$M_2=(m_{21},m_{22}, \ldots)$ be the intervals or mirrors in
clockwise order along $bd_c(s,t)$. Likewise, let $M_i=(m_{i1},m_{i2}, \ldots)$
be the mirrors created from the set $M_{i-1}=(m_{(i-1),1},m_{(i-1),2}, \ldots)$
of mirrors for $i\geq 3$. We have the following lemmas.

\medskip

\begin{lem}
All mirrors  of $M_2$ appear after all mirrors of $M_1$ in clockwise order 
along $BCV(s,t)$. 
\label{lemma:no-121}
\end{lem}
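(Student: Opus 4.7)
I would proceed by contradiction. Suppose the conclusion fails, so there exist points $z_2$ in some mirror of $M_2$ and $z_1^{\ast}$ in some mirror of $M_1$ with $z_2$ lying counterclockwise of $z_1^{\ast}$ along $bd_c(s,t)$. By the definition of $M_2$, there is some $z_1 \in M_1$ with $z_1 z_2 \subset P$, and we may take $z_2 \notin M_1$ (so $sz_2 \not\subset P$), since $M_2$ records only intervals that are reachable from $s$ via a reflection off $M_1$ and not already present in $M_1$. The strategy is to show that in fact $sz_2 \subset P$, which would force $z_2 \in M_1$ and yield the desired contradiction.

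I would consider the sub-polygon $Q$ carved out of $P$ by the chord $sz_1$ on the clockwise side, so that $\partial Q$ consists of the chord $sz_1$ together with the arc $bd_c(s,z_1)$ containing $z_2$. Since $z_1 z_2 \subset P$ starts on the chord at $z_1$ and ends on the $Q$-side of the chord at $z_2$, it cannot cross $sz_1$, and hence $z_1 z_2 \subset Q$. Thus within $Q$ both the segment $sz_1$ and the segment $z_1 z_2$ are available, and the main task reduces to pushing a direct line of sight from $s$ to $z_2$ through $Q$.

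The principal obstacle is ruling out any reflex vertex of $P$ on the sub-arc $bd_c(s, z_2) \subset \partial Q$ that could block $sz_2$ inside $Q$. Here I would exploit the hypothesis $z_2 \in BCV(s,t)$ in conjunction with Lemma \ref{lemma:sptst}: the parent of $z_2$ in the shortest-path tree of $P$ rooted at $s$ is some vertex $u$ of $SP(s,t)$ with $uz_2 \subset P$, where $u$ lies on the opposite side of the chord $sz_1$ from $Q$, while the segment $uz_2$ must pass through $Q$ to reach $z_2$. A putative reflex blocker on $bd_c(s, z_2)$ sticking into $Q$ would either obstruct $uz_2$ (contradicting $z_2 \in CV(s,t)$ via the left-tangent property defining $CV(s,t)$) or, by the simplicity and convexity enforced by Lemma \ref{lemma:cdrpconvexsimple} and Corollary \ref{corollary:sequence} applied to the partial path $(s, z_1, z_2)$, obstruct $z_1 z_2$ (contradicting the hypothesis $z_1 z_2 \subset P$). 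The delicate case analysis required to exhaust all configurations of the blocker is the step I expect to be hardest.

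Lastly, I would verify the corner case in which $z_2$ lies in a gap between two consecutive mirrors of $M_1$ rather than strictly counterclockwise of the whole of $M_1$. The same sub-polygon argument then applies with $z_1$ chosen to be a point in any mirror of $M_1$ clockwise of $z_2$ that sees $z_2$, and the contradiction is obtained uniformly across all cases, completing the proof.
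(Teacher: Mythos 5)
There is a genuine gap, and it sits at the very first structural choice: your subpolygon $Q$ is cut off by the chord $sz_1$ with $z_2\in bd_c(s,z_1)$, which forces the $M_1$ witness $z_1$ that sees $z_2$ to lie \emph{clockwise} of $z_2$. The definition of $M_2$ only guarantees that $z_2$ is visible from \emph{some} point of \emph{some} mirror of $M_1$, and in the configuration that carries all of the content of Lemma \ref{lemma:no-121} --- $z_2$ lying in a gap between two mirrors of $M_1$, i.e.\ the paper's points $a,b,c$ with $b$ the candidate $M_2$ point --- that witness is naturally the counterclockwise one ($a$); an $M_1$ point clockwise of $z_2$ that sees $z_2$ need not exist, and your closing paragraph simply asserts one. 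So exactly the case the lemma is about is not covered. The paper's proof is oriented the other way: it uses visibility of $b$ from the counterclockwise witness $a$ (together with $sa\subset P$) to rule out $bd_c(a,b)$ blocking $sb$, uses $b\in BCV(s,t)$ to rule out $bd_c(b,c)$ blocking $sb$, and uses visibility of the clockwise $M_1$ point $c$ from $s$ to rule out $SP(s,t)$ blocking $sb$; the clockwise point is needed only for its visibility from $s$, never as a point that sees $b$.

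Even granting $Q$, the step you defer as ``hardest'' is precisely the missing content, and the tools you cite do not supply it. Lemma \ref{lemma:sptst} is a statement about \emph{vertices} of $bd_c(s,t)$ and their parents in the two shortest path trees; $z_2$ is an interior point of an edge, and nothing forces its parent in the tree rooted at $s$ to be a vertex of $SP(s,t)$ lying on the far side of the chord $sz_1$. Membership of $z_2$ in $CV(s,t)$ only guarantees that the two tangents from $z_2$ to $SP(s,t)$ lie in $P$; it does not by itself put the segment $sz_2$ inside $P$ (otherwise every point of $BCV(s,t)$ would already belong to $M_1$ and no reflections would be needed). Finally, Lemma \ref{lemma:cdrpconvexsimple} and Corollary \ref{corollary:sequence} are assertions about complete $cdrp(s,t)$'s, so they cannot be ``applied to the partial path $(s,z_1,z_2)$'', which in your own configuration (with $z_2$ counterclockwise of $z_1$) is not even a clockwise convex chain. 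As written, the contradiction is never actually derived; to repair the argument you should start from the counterclockwise witness furnished by the definition of $M_2$ and argue as the paper does.
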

\begin{proof}
Let $a$, $b$ and $c$ be three points on $bd_c(s,t)$ in clockwise order
such that $a$ and $c$ are visible from $s$ but $b$ is not. In other words, $a$
and $c$  belong to two mirrors of $M_1$ but $b$ does not
belong to any mirror of $M_1$. Assume that $b$ is visible from $a$. 
So, no part of $bd_c(s,t)$ between $a$ and $b$ can intersect $bs$. 
If $bd_c(s,t)$ between $b$ and $c$ intersects $bs$, then $b$ does not belong to
$BCV(s,t)$. If $SP(s,t)$ intersects $bs$, then it also intersects
$cs$, contradicting the assumption that $c$ is visible from $s$. 
So, the second turning point of a $cdrp(s,t)$ must be on subsequent reflecting
edges of $M_1$ in clockwise order. 
\end{proof}
\begin{figure*}[ht]
\begin{minipage}[b]{1.0\textwidth}
\centering
\includegraphics[width=0.7\columnwidth]{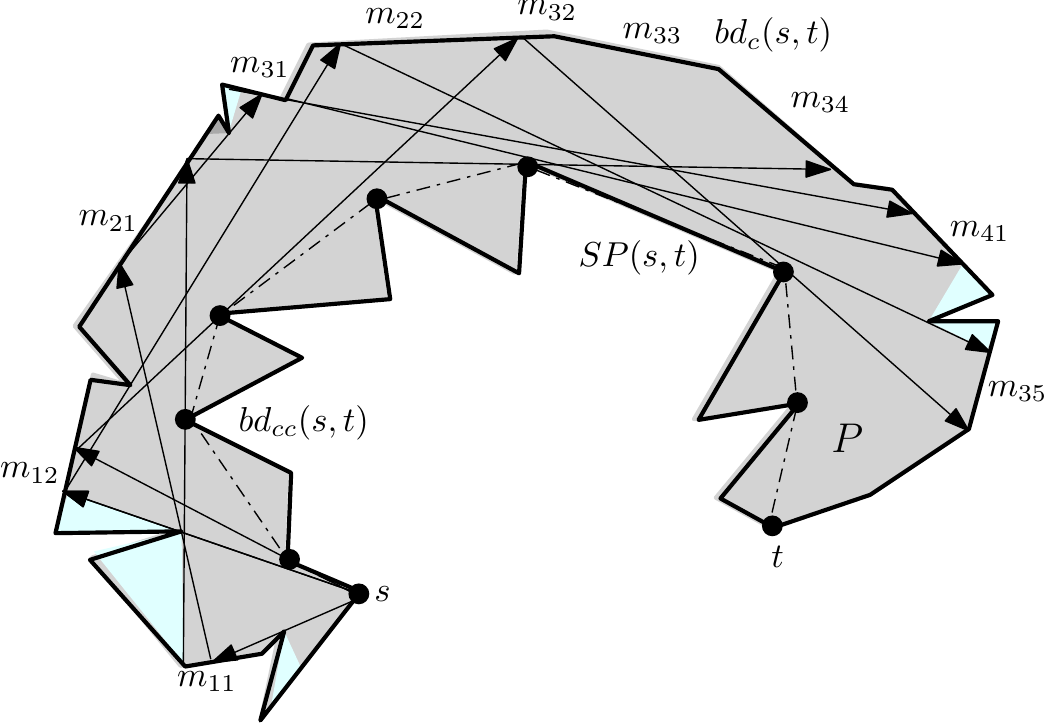}\\
\captionof{figure}{Mirrors of $M_2$, $M_3$ and $M_4$ are interleaved.}
\label{loverlapping}
\end{minipage}
\end{figure*}

The above relation between mirrors $M_1$ and $M_2$ cannot be
generalized for mirrors of $M_2,M_3,\ldots$, 
since mirrors in these sets may be interleaved as 
shown in Figure \ref{loverlapping}.
On the other hand, we have the following properties relating mirrors of
$M_1,M_2,M_3,\ldots$ and turning points of any $cdrp(s,t)$. 
\begin{lem}
No point on a mirror of $M_i$ is visible from any point on any mirror of
$M_1, \ldots, M_{i-2}$, for all $i \geq 3$.
\label{lemma:Mi-invisibility}
\end{lem}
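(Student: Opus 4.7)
The plan is to prove this by induction on $i$, leveraging the inductive construction of $M_1, M_2, \ldots$ in Section~\ref{sssec:path-noeave-mirror}. The critical interpretive point is that the mirror sets are implicitly defined to be pairwise disjoint: $M_i$ consists of those maximal intervals of $BCV(s,t)$ that are made up of points visible from some mirror of $M_{i-1}$ and are not already present in $M_1 \cup \cdots \cup M_{i-1}$. Equivalently, each point of $BCV(s,t)$ is assigned to the smallest index $j$ for which it is reachable from $s$ by a chain of $j$ mirror-to-mirror visibility steps. Under this convention, the lemma is essentially a restatement of the minimality of this assigned index.

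For the base case $i = 3$, I would suppose for contradiction that some point $z$ on a mirror of $M_3$ is visible from a point $y$ on a mirror of $M_1$. Then, by the defining property of $M_2$, the maximal interval of $BCV(s,t)$ around $z$ consisting of points visible from some mirror of $M_1$ is itself a mirror of $M_2$, placing $z$ in $M_2$ and contradicting disjointness with $M_3$. For the inductive step, assuming the statement for all $3 \le k < i$, I would suppose $z$ on a mirror of $M_i$ is visible from $y$ on a mirror of $M_j$ with $j \le i - 2$. By the definition of $M_{j+1}$, the point $z$ is either already captured in some $M_k$ with $k \le j$ or newly assigned to $M_{j+1}$; in either case $z$ lies in some $M_k$ with $k \le j+1 \le i-1$, contradicting $z \in M_i$.

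The main obstacle, I expect, is formalizing the implicit first-appearance convention: the textual definition of $M_i$ emphasizes visibility from $M_{i-1}$ without explicitly saying that intervals already assigned to earlier indices are excluded, yet the accompanying Lemma~\ref{lemma:no-121} and the figure depicting interleaved $M_2, M_3, M_4$ only make sense under this convention. Once the convention is fixed, the argument is purely combinatorial and does not require either the convexity of $cdrp(s,t)$ from Lemma~\ref{lemma:cdrpconvexsimple} or the containment in $CV(s,t)$ from Lemma~\ref{lemma:cvcdrp}; those results would only come into play if one additionally wished to characterize reachability via a \emph{valid} convex, simple $cdrp$ of a prescribed number of reflections rather than through raw visibility chains.
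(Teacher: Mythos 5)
Your proposal is correct and follows essentially the same route as the paper, whose proof is simply that the claim follows from the definition of the mirrors $M_i$ — i.e., exactly the first-appearance/exclusion convention you identify, which the paper confirms later when it states that $m_{ij}$ introduces mirrors of $M_{i+1}$ only on portions of its span not already visible from mirrors of $M_1,\ldots,M_{i-1}$. Your induction merely makes this definitional argument explicit (indeed, once the convention is fixed the statement is immediate without induction), so no substantive difference from the paper's proof.
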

\begin{proof}
 The proof follows from the definition of the mirrors $M_i$. 
\end{proof}
\begin{corr}
If every turning point on a $cdrp(s,t)$ belongs to a distinct $M_i$, then this
$cdrp(s,t)$ is an $mcdrp(s,t)$. 
\label{corollary:alternateinvisibility}
\end{corr}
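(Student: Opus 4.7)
The plan is to exploit a BFS-style layering of turning points enforced by Lemma~\ref{lemma:Mi-invisibility}, combined with the stopping rule implicit in the mirror construction. First I would establish a monotonicity lemma: for any $cdrp(s,t) = (s, w_1, \ldots, w_q, t)$, the turning point $w_k$ lies in some $M_{j_k}$ with $j_k \leq k$. The base case $w_1 \in M_1$ is immediate because $w_1$ is visible from $s$ and, by Lemma~\ref{lemma:cvcdrp}, lies on a reflecting edge of $BCV(s,t)$, matching the defining property of $M_1$. For the inductive step, $w_{k+1}$ is visible from $w_k \in M_{j_k}$ and itself lies in $BCV(s,t)$; Lemma~\ref{lemma:Mi-invisibility} rules out $w_{k+1} \in M_i$ for any $i \geq j_k + 2$, so $w_{k+1} \in M_{j_{k+1}}$ with $j_{k+1} \leq j_k + 1 \leq k + 1$.

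Applying this monotonicity to the given path $(s, z_1, \ldots, z_p, t)$: its layer indices $j_1, \ldots, j_p$ are distinct by hypothesis, start at $j_1 = 1$, and satisfy $j_{k+1} \leq j_k + 1$, so a short induction forces $j_k = k$ for every $k$, and in particular $z_p \in M_p$. The other ingredient I need is the stopping rule inherent in the mirror construction: the sets $M_1, M_2, \ldots$ are generated one layer at a time, and the construction terminates as soon as $t$ becomes visible from the current layer, since at that moment a $cdrp(s,t)$ is produced. The very existence of a non-empty layer $M_p$ used by the hypothesized path therefore implies that $t$ is not visible from any point of $M_1 \cup \cdots \cup M_{p-1}$; otherwise the construction would have halted before $M_p$ was ever created.

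To close the argument, I would suppose for contradiction that some $cdrp(s,t) = (s, w_1, \ldots, w_q, t)$ has $q < p$ turning points. By the monotonicity lemma, $w_q \in M_{j_q}$ with $j_q \leq q \leq p - 1$, and since the path reaches $t$, the destination $t$ would be visible from a point of $M_1 \cup \cdots \cup M_{p-1}$, contradicting the stopping rule. Hence the given path achieves the minimum number of turns, i.e., it is an $mcdrp(s,t)$. The main obstacle I anticipate is the monotonicity lemma itself, specifically verifying that every turning point lies in \emph{some} $M_i$ and invoking Lemma~\ref{lemma:Mi-invisibility} in the correct direction in the inductive step; once that is in place, the remaining pigeonhole contradiction is routine.
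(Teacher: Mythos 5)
Your argument is correct and follows essentially the same route as the paper's (very terse) proof: use Lemma~\ref{lemma:Mi-invisibility} to force the turning points onto mirrors of $M_1,M_2,\ldots,M_k$ in order, and conclude minimality because $t$ first becomes visible only at layer $k$ while any $cdrp$ can advance the mirror index by at most one per turn. You simply make explicit the layer-monotonicity induction and the ``$t$ not visible from earlier layers'' assumption that the paper leaves implicit in its construction of the mirror sets.
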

\begin{proof}
If each of the $k$ turning points of a $cdrp(s,t)$ is from a distinct $M_i$,
then due to Lemma~\ref{lemma:Mi-invisibility}, the turning points must be on
mirrors of $M_1$, $M_2$, ..., $M_k$, respectively. Therefore, the $cdrp(s,t)$
is also an $mcdrp(s,t)$.
\end{proof}

\begin{figure*}[ht]
  \centering
\begin{minipage}[b]{1.0\textwidth}
\centering
\includegraphics[width=0.6\columnwidth]{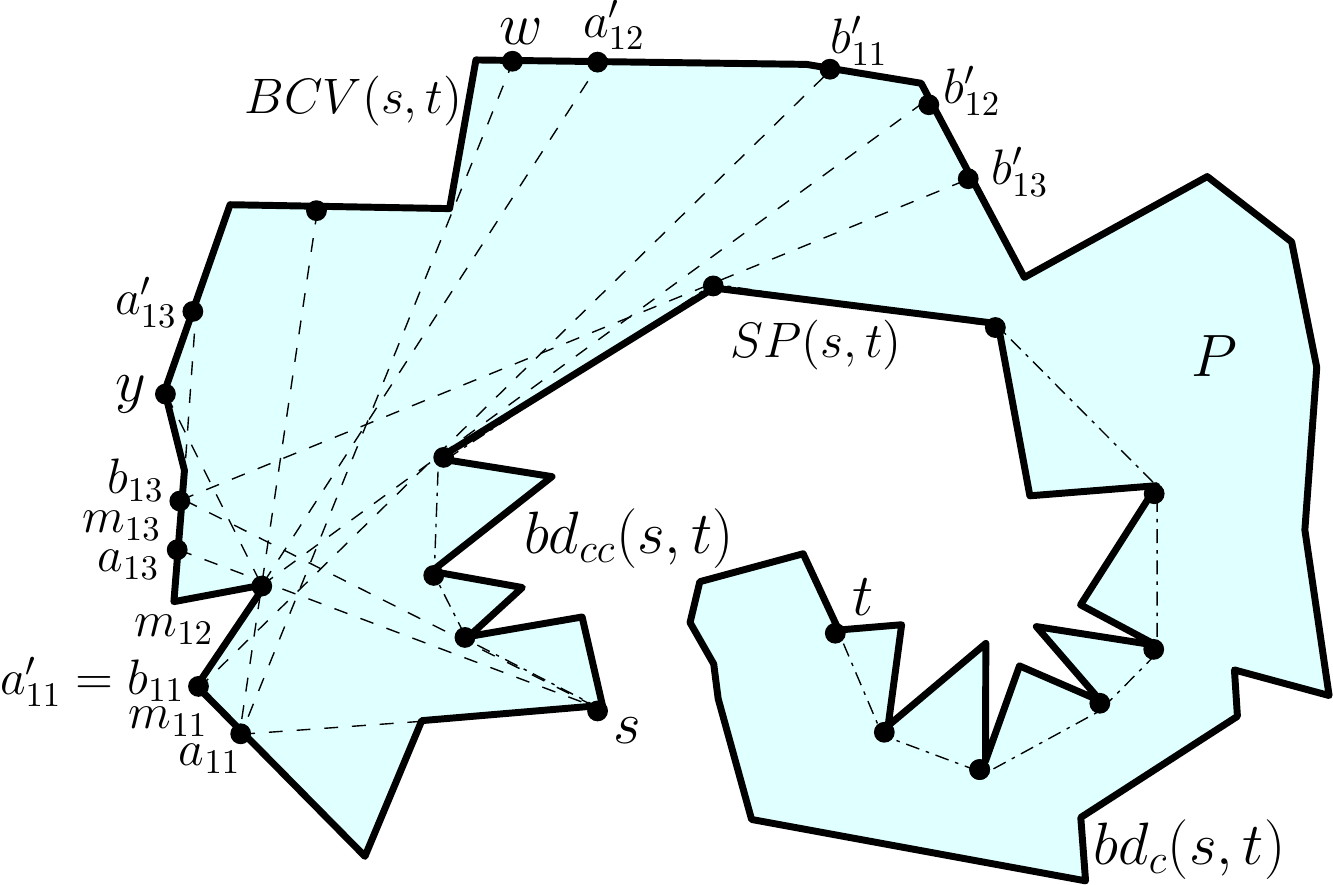}\\
\captionof{figure}{After locating $a'_{11}, a'_{12}, \ldots$ and 
$b'_{11}, b'_{12}, \ldots$ on $BCV(s,t)$, 
$M_2=(m_{21},m_{22},m_{23}, \ldots)$ is computed from 
$M_1=(m_{11},m_{12},m_{13},\ldots)$. Note that $a_{12}$, $a'_{11}$ and 
$b_{11}$ happen to be the same point in this figure.}
\label{lmirror-disjointness}
\end{minipage}
\end{figure*}

We next discuss the computation of endpoints of mirrors of $M_i$ from 
the mirrors of $M_{i-1}$ for $i\geq 2$. 
A point $x \in BCV(s,t)$ is said to be {\it weakly visible} from
a mirror if $x$ is visible from some point of the mirror.
Consider the first mirror $m_{11}$ (see Figure
\ref{lmirror-disjointness}). Let $a_{11}$ and $b_{11}$ be the endpoints 
of $m_{11}$, where  $b_{11}$ is the subsequent clockwise point of $a_{11}$ 
on $BCV(s,t)$. Let $a'_{11}$ be the first point in the clockwise order
on a reflecting edge of $M_2$ that is visible from $a_{11}$. 
Similarly, let $b'_{11}$ be the last point in the clockwise order
on a reflecting edge of $M_2$ that is visible from $b_{11}$. 
Observe that if the right tangent from $b_{11}$ to $SP(s,t)$ is extended
to $BCV(s,t)$, then it meets  $BCV(s,t)$ at $b'_{11}$. The portion of 
$bd_c(a'_{11},b'_{11})$ belonging to $BCV(s,t)$, and weakly 
visible from $m_{11}$ is called the {\it span} 
of $m_{11}$, and is denoted as $span(a'_{11},b'_{11})$. 
In the same way, the span of any mirror $m_{ij}$ can be defined. Observe that
$b'_{11}$, $b'_{12}$, $b'_{13}, \ldots$ occur in clockwise order along
$BCV(s,t)$.

\medskip

From the above definitions, the mirrors of $M_2$ formed due to reflections on
$m_{11}$ must belong to $span(a'_{11},b'_{11})$. 

\medskip

Let us identify mirrors of $M_2$ formed due to reflections on $m_{12}$.
These mirrors are formed on $span(a'_{12},b'_{12})$, that is,
on the portion of $bd_c(a'_{12},b'_{12})$ on $BCV(s,t)$, after
excluding the portion $span(a'_{11},b'_{11})$. 
If  $a'_{12} \in span(a'_{11},b'_{11})$, 
it follows that mirrors of $M_2$ formed due to the
reflection on $m_{12}$ must belong to the non-overlapping portion
$bd_c(b'_{11},b'_{12})$, 
weakly visible from $m_{12}$ and on $BCV(s,t)$. If 
$span(a'_{11},b'_{11})$ and $span(a'_{12},b'_{12})$ are disjoint, then 
mirrors of $M_2$ formed due to reflections on $m_{12}$
belong to $span(a'_{12},b'_{12})$. If
$span(a'_{12},b'_{12})$ 
contains $span(a'_{11},b'_{11})$, two weakly 
visible portions of $bd_c(a'_{12}, a'_{11})$ and $bd_c(b'_{11},b'_{12})$
on $BCV(s,t)$,
contain mirrors of
$M_2$  formed due to reflections on $m_{12}$.

\medskip

For identifying mirrors of $M_2$ formed due to reflections on $m_{13}$, remove
$span(a'_{11},b'_{11})$ and $span(a'_{12},b'_{12})$ from 
$span(a'_{13},b'_{13})$. Mirrors of $M_2$ formed due to reflections on $m_{13}$
lie on the remaining portions of  $span(a'_{13},b'_{13})$. 
Using this process of concatenations repeatedly, mirrors of $M_2$ can
be identified from the spans of mirrors of $M_{1}$. Furthermore, 
mirrors of $M_i$ can be identified given mirrors of $M_{i-1}$ in a similar
manner for $i\geq 3$. 

\medskip

Now we consider two kinds of mirrors possible on 
edges of $BCV(s,t)$. Some of these mirrors are {\it side}
mirrors, ending at vertices of $BCV(s,t)$. The others are {\it internal}
mirrors with endpoints in the interiors of edges of $BCV(s,t)$.
In the following lemmas, we bound the number of mirrors
using the maximum link distance (denoted as $\alpha (i)$) between
any two points on the first and last mirrors of $M_i$, for all $i$. 

\medskip 

\begin{figure*}[ht]
 \begin{minipage}[b]{1.0\textwidth}
  \centering
 \includegraphics[width=0.8\columnwidth]{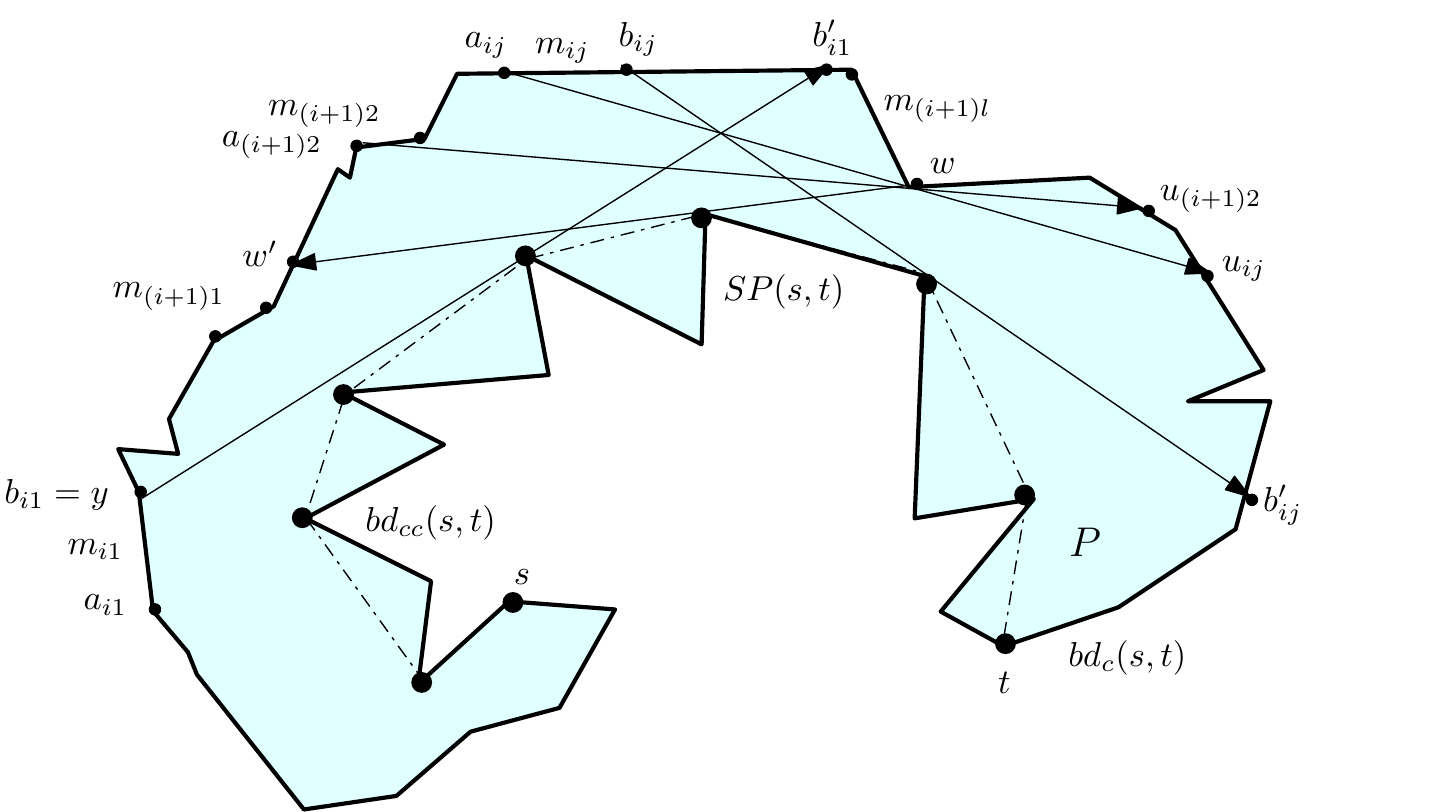}\\
\captionof{figure}{The vertex $w$ of $BCV(s,t)$ is visible 
from $m_{ij}$ and $m_{(i+1)2}$.}
\label{lconstructededge}
\end{minipage}
\end{figure*}

\begin{lem}
A vertex $w$ of $BCV(s,t)$ can be visible only from mirrors of $M_i$, 
$M_{i+1}$, $M_{i+2} \ldots$, $M_{i+\alpha(i)}$, where
$i$ is the smallest index such that a mirror of $M_i$ sees $w$.
\label{lthreeindex}
\end{lem}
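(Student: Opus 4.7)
The plan is to bound the span of mirror levels from which the vertex $w$ can be visible, once $w$ has first appeared at level $i$. Let $m_{ij} \in M_i$ be a mirror visible from $w$, and suppose for contradiction that $w$ is visible from some mirror $m_{k\ell} \in M_k$ with $k > i+\alpha(i)$.

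First, I would argue that the set of indices $j$ for which $w$ is visible from some mirror of $M_j$ forms a contiguous block starting at $i$. The key observation is that since $w$ lies on a reflecting edge of $BCV(s,t)$ and is visible from $m_{ij} \in M_i$, the point $w$ lies in the closure of some mirror of $M_{i+1}$; the span of that mirror then propagates a neighborhood of $w$ into the closure of some mirror of $M_{i+2}$, and so on, as long as the cascade around $w$ persists. In particular, Lemma~\ref{lemma:Mi-invisibility} ensures that this cascade cannot double back to earlier levels, so the set of witnessing indices is indeed an interval.

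Next, to bound the length of this interval by $\alpha(i)$, I would relate each successive level at which $w$ remains visible to one additional link in a reflection path traversing the mirrors of $M_i$. Unfolding the definition of $m_{k\ell} \in M_k$ yields a chain of reflections $q_i \in m_{i,j'}, q_{i+1}, \ldots, q_k = q \in m_{k\ell}$ with each $q_{r+1}$ visible from $q_r$, so there is a reflection path of length $k-i$ from the mirror $m_{i,j'} \in M_i$ to $m_{k\ell}$. Since both $m_{ij}$ and $m_{i,j'}$ lie in $M_i$, the link distance between points on them is at most $\alpha(i)$ by definition. I would then combine this with the one-link visibility $m_{ij} \to w$ and $q \to w$ to show that the cascade of mirrors enclosing $w$ advances along $M_i$ by one mirror per level, and hence terminates after at most $\alpha(i)$ levels.

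The main technical obstacle is making the correspondence between successive witnessing levels and links in a reflection path through $M_i$ fully rigorous, since interleaving of mirrors across levels (as in Figure~\ref{loverlapping}) complicates a direct induction. I anticipate relying on the contiguity of mirrors of $M_i$ along $BCV(s,t)$ together with the simplicity and convexity of paths established in Lemma~\ref{lemma:cdrpconvexsimple} and Lemma~\ref{lemma:tangents}, and a case analysis separating side mirrors from internal mirrors, since their endpoints behave differently under the cascade.
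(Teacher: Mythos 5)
There is a genuine gap at the heart of your argument: the step that is supposed to force $k\leq i+\alpha(i)$ is never actually carried out, and the facts you assemble cannot yield it. Unfolding $m_{k\ell}\in M_k$ into a chain $q_i\in m_{i,j'},q_{i+1},\ldots,q_k\in m_{k\ell}$ gives an \emph{upper} bound (a path of $k-i$ links exists from an $M_i$ mirror to $m_{k\ell}$), and the bound ``link distance between mirrors of $M_i$ is at most $\alpha(i)$'' is again an upper bound; no contradiction with $k>i+\alpha(i)$ can follow from upper bounds alone. What is needed is a \emph{lower} bound tying the level of a mirror that sees $w$ to a link distance that is itself at most $\alpha(i)$. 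Note also that Lemma~\ref{lemma:Mi-invisibility} only gives ``level difference $\geq 2$ implies not visible,'' i.e.\ link distance $\geq 2$; it does not by itself give ``level difference $d$ implies link distance $\geq d$,'' so the quantitative correspondence between levels and links must be argued, not invoked. Your closing phrase that the cascade ``advances along $M_i$ by one mirror per level, and hence terminates after at most $\alpha(i)$ levels'' also conflates two different quantities: $\alpha(i)$ is a link distance (between points on the first and last mirrors of $M_i$), not a count of mirrors of $M_i$, and there may be arbitrarily many mirrors of $M_i$ within link distance $\alpha(i)$. Finally, the clause ``the link distance between points on them is at most $\alpha(i)$ by definition'' is not by definition, since $\alpha(i)$ is defined only for the first and last mirrors of $M_i$.

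The paper closes exactly this gap with a geometric localization that your proposal lacks: draw the left tangent from $w$ to $SP(s,t)$ and extend it to meet $BCV(s,t)$ at a point $w'$; then \emph{every} mirror that can see $w$ lies on $bd_c(w',w)$. Within that confined region one argues that the mirror index can increase by at most one per unit of link distance, so the number of distinct levels whose mirrors lie in $bd_c(w',w)$ and see $w$ is controlled by the link distance from the first mirror of $M_i$ to $w'$, which is at most $\alpha(i)$; this is the missing lower-bound direction. Your contiguity argument, by contrast, is about mirrors of successive levels \emph{containing} (a neighborhood of) $w$, not about mirrors elsewhere on $BCV(s,t)$ from which $w$ is visible, and contiguity of the witnessing set is neither claimed by Lemma~\ref{lthreeindex} nor needed for it. To repair the proof you would have to add the tangent-based confinement (or an equivalent localization) and then make precise the statement that, inside $bd_c(w',w)$, crossing one link of a minimum link path from $m_{i1}$ (or $w'$) can raise the index of the mirrors encountered by at most one.
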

\begin{proof}
Draw the left tangent from $w$ to $SP(s,t)$ and extend to $BCV(s,t)$ meeting it
at a point $w'$ (see Figure \ref{lconstructededge}). 
All mirrors that can see
$w$ must belong to $bd_c(w',w)$. Locate the mirror $m_{qj}$  on  
$bd_c(w',w)$ such that (i) $m_{qj}$ can see $w$, (ii) $q$ is the smallest among
all mirrors on $bd_c(w',w)$ that can see $w$, and
(iii) $j$ is the smallest among all mirrors of $M_q$ on $bd_c(w',w)$ 
that can see $w$. We set $i$ to $q$.
We know that any $mcdrp(s,w)$, can have only one turning point 
on a mirror amongst
all mirrors of $bd_c(w',w)$
due to Lemma \ref{lemma:Mi-invisibility}. We assume that at least one mirror of
$M_i$ does not see $w$, e.g., the first mirror $m_{i1}$ of $M_i$. This is the case
where $\alpha(i)=2$; the case where $\alpha(i)=1$ is simpler because 
mirrors of $M_{i+1}$ in $bd_c(w',w)$ may see $w$ but no mirror of
$M_k$,  for $k>i+1$ in the same region can see $w$ due to Lemma  
\ref{lemma:Mi-invisibility}. 
Let $y$ be the next clockwise vertex of $bd_c(s,w')$ after $m_{i1}$.  
We know that all mirrors of $M_i$, $M_{i+1}, \ldots$
belong to $bd_c(y,t)$. Assume that $b_{i1}b'_{i1}$ intersects $b_{ij}b'_{ij}$.
So, mirrors of $M_{i+1}$, created by $m_{i1}$ or any subsequent mirror of
$M_i$ on $bd_c(w',w)$  can see $w$. 
Since $b_{i1}b'_{i1}$ intersects $b_{ij}b'_{ij}$,
there can be at most one turning point of a $mcdrp(s,w)$ after the turning point
on $m_{i1}$ on any mirror of $M_{i+1}$ belonging to $bd_c(y,w')$.
These mirrors of $M_{i+1}$ can create mirrors of $M_{i+2}$ that can see $w$.
So, mirrors of  $M_{i+1}$ and $M_{i+2}$ can also see $w$. 
The same argument shows that if the maximum link distance from
a point on $m_{i1}$ to $w'$ is three, then mirrors of $M_i$, $M_{i+1}$, $M_{i+2}$
and  $M_{i+3}$ can see $w$,
as the mirror index increases by one for every link distance.
Thus, $w$ can be visible only from mirrors
of $M_i$, $M_{i+1}$, $M_{i+2} \ldots$, $M_{i+\alpha(i)}$ for some  value of $i$.
\end{proof}

\begin{lem}
Assume that $t$ is visible from a mirror of $M_k$ but not visible from any mirror of
$M_1,M_2,\ldots,M_{k-1}$ for $k<n$.
Let $\beta$ denote the maximum amongst 
\{$\alpha(3) (k-3), \alpha(4) (k-4),\ldots,\alpha(k)(k-(k-1))$\}.
The total number of mirrors $N$ in $M_1,M_2,\ldots, M_{k+\alpha(k)}$
is at most $O(n(n+\beta))$.
\label{lmirror-size}
\end{lem}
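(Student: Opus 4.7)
The plan is to bound $|M_i|$ for each $i$ in the range $1\leq i\leq k+\alpha(k)$ and then sum the bounds. I will split each mirror set into \emph{side} mirrors (those with at least one endpoint at a vertex of $BCV(s,t)$) and \emph{internal} mirrors (whose two endpoints both lie in the interior of reflecting edges of $BCV(s,t)$). Throughout, I will lean on Lemma \ref{lemma:Mi-invisibility} to control visibility between distant levels and on Lemma \ref{lthreeindex} to control which levels a given vertex can participate in.

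First, I would dispatch the base cases $M_1$ and $M_2$. The set $M_1$ is a substructure of the visibility polygon of $s$ restricted to $BCV(s,t)$, so $|M_1|=O(n)$. For $M_2$, Lemma \ref{lemma:no-121} places all mirrors of $M_2$ strictly clockwise of those of $M_1$, and the monotonic clockwise ordering of the sequence $b'_{11},b'_{12},\ldots$ along $BCV(s,t)$ (together with the analogous statement for the $a'_{1j}$) lets me argue that the spans of distinct $m_{1j}$'s overlap in a staircase pattern. Charging each mirror of $M_2$ to either a vertex of $BCV(s,t)$ or to a transition of the weakly visible region along one of the $O(n)$ edges then yields $|M_2|=O(n)$.

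For $i\geq 3$, I would count mirror endpoints separately. By Lemma \ref{lthreeindex}, any vertex $w$ of $BCV(s,t)$ can appear as a side-mirror endpoint only in the window $M_{i(w)},\ldots,M_{i(w)+\alpha(i(w))}$, where $i(w)$ is the smallest level whose mirror sees $w$. Summing over the $O(n)$ vertices of $BCV(s,t)$, the total number of side-mirror endpoints across all levels is at most $O(n\cdot\max_j\alpha(j))$. Each internal endpoint in $M_i$ is created by the left or right tangent from a reflex vertex of $P$ passing through an endpoint of some $m_{(i-1),j}$ and extended to $BCV(s,t)$; charging each such tangent extension to the generating endpoint of $M_{i-1}$ and arguing via Lemma \ref{lemma:Mi-invisibility} that propagation of any single tangent extension can traverse at most $\alpha(i)$ consecutive levels before being blocked, I obtain a per-level bound $|M_i|=O(n+\alpha(i))$ where the $\alpha(i)$ factor multiplies a propagation depth of $(k-i)$ future levels.

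Summing from $i=1$ to $k+\alpha(k)$, the side-mirror contribution is $O(nk)\leq O(n^2)$ since $k<n$, while the internal-mirror contribution is at most $O(n)\cdot\max_{3\leq i\leq k}\alpha(i)(k-i)=O(n\beta)$ by definition of $\beta$. Adding the $O(n)$ contributions from $M_1$ and $M_2$, the total is $N=O(n(n+\beta))$. The step I expect to be the main obstacle is the internal-endpoint analysis: formalizing the charging so that tangent extensions traced through multiple levels (as in the interleaved picture of Figure \ref{loverlapping}) are not double counted, and translating the reach of such tangents into the promised $\alpha(i)(k-i)$ factor hidden inside $\beta$ while simultaneously using Lemma \ref{lemma:Mi-invisibility} to cap the backward visibility of any new internal endpoint created at level $i$.
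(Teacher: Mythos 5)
Your plan is, in outline, the same charging scheme the paper uses: charge mirror endpoints to vertices of $BCV(s,t)$, bound by $k-i$ the length of the chain of internal endpoints that one charge can propagate through later levels, use Lemma \ref{lthreeindex} to cap by $\alpha(i)$ the number of consecutive levels from which a vertex can be seen, and sum an $O(nk)$ term and an $O(n\beta)$ term. So this is not a genuinely different route; the difficulty is that the step you explicitly defer as ``the main obstacle'' is exactly the substance of the paper's proof. There, the accounting is done per vertex: when a vertex $w$ first becomes weakly visible at level $i$ (only from the first mirror $m_{ij}$ of $M_i$ that sees it, later spans of the same level being excluded by the span-concatenation rule), $w$ becomes a side endpoint $b_{(i+1)l}$, and either the next clockwise vertex or the collinear point $u_{ij}$ supplies $a_{(i+1)(l+1)}$; these spawn at most two sequences of internal endpoints $b'_{(i+1)l}, a'_{(i+1)(l+1)},\ldots$, each of length at most $k-i$, which yields the $O(nk)$ part. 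Only when $w$ is seen again from mirrors of $M_{i+1}, M_{i+2},\ldots$ --- at most $\alpha(i)$ further levels by Lemma \ref{lthreeindex}, and only when the newly visible portion $bd_c(u_{(i+1)q},u_{ij})$ is not already covered --- can extra chains start, at most $\alpha(i)$ of them, again of length at most $k-i$; this is where $\alpha(i)(k-i)\leq\beta$ per vertex, hence $n\beta$ overall (plus $\alpha^2(k)$ for the levels beyond $M_k$), comes from. Your proposal names the target quantity $\alpha(i)(k-i)$ but supplies no mechanism that prevents a vertex from being charged more than $O(1)+\alpha(i)$ times, so as written the bound is asserted rather than derived.

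Two further points. First, your intermediate claim of a per-level bound $|M_i|=O(n+\alpha(i))$ is neither needed nor coherent: a single level $M_i$ can contain internal mirrors belonging to chains initiated at many earlier levels, so its size is not controlled by $n$ plus one $\alpha(i)$; the paper never proves a per-level bound and only bounds the total, $N\leq 4nk+n\beta+\alpha^2(k)$. Second, your description of internal endpoints as tangents from reflex vertices of $P$ through endpoints of $m_{(i-1)j}$ is only partly right: in the paper the points $b'_{ij}$ arise from tangents to $SP(s,t)$ through mirror endpoints, and additional endpoints arise from collinearity of $a_{ij}$, $w$ and $u_{ij}$ with the vertex $w$ itself. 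Getting this generation rule right is what guarantees that every chain of internal endpoints is rooted at a vertex of $BCV(s,t)$ (or at an $M_1$ endpoint), which is the fact that makes a per-vertex charge with multiplicity $O(1)+\alpha(i)$, and hence the final $O(n(n+\beta))$ bound, legitimate.
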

\begin{proof}
Let $w\in BCV(s,t)$ be visible from a mirror of $M_i$ in $bd_c(s,w)$, 
but not visible from any 
mirror of $M_{i-1}$ (see Figure \ref{lconstructededge}). 
Let $m_{ij}$ be the
first mirror of $M_i$ in the clockwise order
that can see $w$. So, $w$ belongs to $span(a'_{ij}, b'_{ij})$. Note that
$w$ may be visible from a mirror $m_{ij'}$ for $j'>j$ but $w$ is considered
only 
in $span(a'_{ij}, b'_{ij})$, since spans of mirrors of $M_i$ are concatenated to
form mirrors of $M_{i+1}$ as stated earlier. 
Therefore, $w$ is one endpoint of a mirror of $M_{i+1}$ (say, $m_{(i+1)l}$)
formed on the counterclockwise edge
of $w$ on $BCV(s,t)$, i.e., $w=b_{(i+1)l}$. 
If the next clockwise vertex of $w$
on $BCV(s,t)$ is visible from $m_{ij}$, then $w$ is also $a_{(i+1)(l+1)}$.
So, $w$ initiates two endpoints $a'_{(i+1)(l+1)}$ and $b'_{(i+1)l}$ of mirrors
of $M_{i+2}$, which, in turn, creates endpoints of mirrors for $M_{i+3}$, and
so on. So, $w$ can initiate at most two sequences of at most $k-i$ internal mirrors for 
$i\leq k-1$. 
Moreover, there can be mirrors of $M_{k+1},M_{k+2},\ldots,M_{k+\alpha (k)}$
between the first and last mirrors of $M_k$ that can also see $w$
due to Lemma \ref{lthreeindex}. 
So, any vertex $w$ that is visible from mirrors of $M_{k+1}$ can initiate
at most two sequences of at most $\alpha(k)-1$ internal mirrors.
Similarly, any vertex $w$ that is visible from mirrors of $M_{k+2}$ can initiate
at most two sequences of at most $\alpha(k)-2$ internal mirrors.
Let $n'$ be the number of such visible vertices $w$.
So, the total number of internal mirrors 
created is at most $2n(k-i)+2n'(\alpha(k))\leq 4nk$, where $\alpha(k)\leq k$. 

\medskip

Consider the other situation where the next clockwise vertex of $w$ on  
$BCV(s,t)$ is not visible from $m_{ij}$. Scan $BCV(s,t)$ from $w$ in 
the clockwise order until the point (say, $u_{ij}$) is located such that
$a_{ij}$, $w$ and $u_{ij}$ are collinear. So, $u_{ij}$ becomes $a_{(i+1)(l+1)}$.
Again, $w$ initiates two endpoints of mirrors of $M_{i+1}$, which create two
endpoints $a'_{(i+1)(l+1)}$ and $b'_{(i+1)l}$ of mirrors of $M_{i+2}$, and so
on. So, for $i\leq k-1$, each such vertex $w$ can initiate at 
most two sequences of at most $k-i$ internal
mirrors. Therefore, a total of at
most $2n(k-i)\leq 2kn$ mirrors can be created in $M_1,M_2,\ldots, M_k$, in this manner.
Moreover, there can be
mirrors of $M_{k+1},M_{k+2},\ldots,M_{k+\alpha (k)}$
between the first and last mirrors of $M_k$, which can create at most
$2n'\alpha(k)$ internal mirrors as shown earlier.    
So, the total number of internal mirrors 
created is at most $2n(k-i)+2n'(\alpha(k))\leq 4nk$, where $\alpha(k)\leq k$. 

\medskip

Let us count additional mirrors of $M_{i+2}$ that may be created due to $w$ 
if $w$ is also visible from mirrors of  $M_{i+1}$. 
Let $m_{(i+1)q}$ be the first mirror of $M_{i+1}$ in the clockwise order that
can see $w$. 
Note that $m_{(i+1)q}$ may lie before or after $m_{ij}$ on $bd_c(s,w)$. 
If $m_{(i+1)q}$ sees both edges of $w$, then no additional 
internal mirror
of $M_{i+2}$ is created on these edges because 
mirrors of $M_{i+1}$ are already
present. 
Consider the other situation, where the next clockwise 
vertex of $w$ on $BCV(s,t)$ is not visible from $m_{(i+1)q}$. Locate the 
next visible point 
$u_{(i+1)q}$ of $w$, as before by scanning $BCV(s,t)$ from $w$. If $u_{(i+1)q}$ does not
belong to $bd_c(w,u_{ij})$, then no additional mirror of $M_{i+2}$ is created because
a mirror of $M_{i+1}$ is already created due to $m_{ij}\in M_i$.
However, if $u_{(i+1)q}$  belongs to  $bd_c(w,u_{ij})$, then additional mirrors
of $M_{i+2}$ are created on $bd_c(u_{(i+1)q},u_{ij})$.  

\medskip 

If this situation
happens repeatedly in this fashion due to mirrors of $M_i$, $M_{i+1} \ldots$ for
$w$, then $\alpha(i)$ additional endpoints of mirrors may be created for $w$ 
due to Lemma \ref{lthreeindex} in addition to visible vertices of
$bd_c(w,u_{ij})$. Each such additional endpoint may initiate a sequence of at most $k-i$
internal mirrors. So, $w$ can cause the creation of a total of 
at most $\alpha(i)(k-i)$ internal mirrors for $i\leq k-1$.
Moreover, any such vertex  $w$ can also 
cause the creation of a total of at most 
$\alpha (k) (\alpha (k)-1)$ internal mirrors due mirrors 
in $M_{k+1},M_{k+2},\cdots,M_{k+\alpha(k)}$.
Therefore, all such visible vertices $w$ occurring on 
spans of different mirrors of
$M_1$, $M_2 \ldots$, $M_{k+\alpha(k)}$  can lead to at 
most $n\beta +\alpha^2 (k)$ internal mirrors. 
Hence, $N$ is bounded by $4nk+n\beta+\alpha^2(k)$, 
which is  bounded by $O(n(n+\beta))$.
\end{proof}

\medskip 

From now onwards we assume that $t$ is visible from a mirror of 
$M_k$ but not visible from any mirror of $M_1,M_2,...,M_{k-1}$.
Let us explain how mirrors of $M_1,M_2,\ldots, M_k$ can be computed by
traversing $BCV(s,t)$ from $s$ to $t$ in clockwise order using the 
method stated
in the proof of Lemma \ref{lmirror-size}. 
We know that each edge of $BCV(s,t)$,
partially or totally visible from $s$, is a mirror of $M_1$. 
Then, $b'_{11},b'_{12}, \ldots$ are computed by scanning $BCV(s,t)$ in  
clockwise order. Point $a'_{11}$ is the first point of $BCV(s,t)$
after $b_{11}$ in clockwise order that is visible from $a_{11}$.
After locating $a'_{11}$, weakly visible portions of $span(a'_{11},b'_{11})$ 
from $m_{11}$ are computed. Then the weakly visible portions of
$span(a'_{12},b'_{12})$ from $m_{12}$ are computed, after excluding
$span(a'_{11},b'_{11})$. Repeating this process of computing spans for the
remaining mirrors of $M_1$, all mirrors of $M_2$ are computed.
Similarly, mirrors of $M_3$ can be computed from mirrors $m_{21}, m_{22},
\ldots$. Repeating this process, mirrors of $M_4,M_5,\ldots, M_k$ can also be
computed.

\medskip

Recall that mirrors of $M_2$, $M_3 \ldots$, $M_k$  may be interleaved 
(see Figure \ref{loverlapping}).  
This means that edges of $span(a'_{ij},b'_{ij})$ may contain mirrors of $M_q$ 
for $q \leq i$, which should be excluded during the construction of new mirrors
of $M_{i+1}$ as explained in the proof of Lemma \ref{lmirror-size}. In other
words, $m_{ij}$ must introduce mirrors of $M_{i+1}$ only on the portions of 
$span(a'_{ij},b'_{ij})$ that
are not already visible from mirrors of $M_1$, $M_2\ldots$, $M_{i-1}$. However,
an edge of $span(a'_{ij},b'_{ij})$ may be visited $\alpha(i)$ times during this 
computation of mirrors from $\alpha(i)$ subsequent stages due to Lemma 
\ref{lthreeindex}.

\medskip

Let us explain how weakly visible edges are computed from mirrors 
of $M_i$.
Scan $BCV(s,t)$ in clockwise order from $a'_{i1}$ to $b'_{i1}$ and compute
the portions that are weakly visible from $m_{i1}$. If $a'_{i2}$ belongs to
$span(a'_{i1}, b'_{i1})$, continue the scan from  $b'_{i1}$ to $b'_{i2}$,
and  compute the portions that are weakly visible from $m_{i2}$. 
If $a'_{i2}$ does not belong to $span(a'_{i1}, b'_{i1})$, scan from 
$a'_{i2}$ to $b'_{i2}$ and compute the portions that are weakly visible 
from $m_{i2}$. Repeating this process for the
remaining mirrors of $M_i$, all mirrors of $M_{i+1}$ are computed by scanning
once. While computing mirrors of $M_{i+2}$ from mirrors of $M_{i+1}$, an edge of
$BCV(s,t)$ may be traversed again. This repetition can occur $\alpha(i)$ times
for some $i$. Note that whenever an edge of $BCV(s,t)$ is traversed, endpoints
of new mirrors are introduced on the edge. Therefore, the total cost of
traversing spans of all mirrors in all stages is bounded by the total number of
mirrors $N$ which is at most $O(n(n+\beta))$ due to Lemma \ref{lmirror-size}.
Also, the shortest path trees rooted at every vertex of $P$ are required
while scanning $BCV(s,t)$; these trees 
can be computed in $O(n^2)$ time using the 
the algorithm of Hershberger \cite{Hershberger89} for computing the visibility 
graph of a simple polygon. 
Hence, the overall time complexity of the algorithm
is $O(n(n+\beta))$. Note that $\beta$ can be $\Theta (n^2)$ for highly
skewed and winding simple polygons.

\medskip
\subsubsection{Computing $mcdrp(s,t)$, the $cdrp(s,t)$ with the minimum
number of turns}
\label{sssec:path-noeave-mcdrp}
Let us state how $mcdrp(t,s)=(t,z_k, \ldots, z_2,z_1,s)$ can be computed from 
$M_k,M_{k-1},\ldots,M_1$ satisfying
Corollary \ref{corollary:alternateinvisibility} (see Figure \ref{lmirror1}).
Consider the computation of $z_k$. Scan $BCV(s,t)$ from $t$ to $s$ and compute
\remove{the shortest path from $t$ to each vertex $v$,} 
until a point $z_k$ on a mirror $vw$ of $M_k$ is found to be visible from $t$,
where $z_k$ is the intersection point of $vw$ and the ray drawn from $t$ through
the next vertex of $SP(t,v)$. 
Note that $SP(t,v)$ makes only right turns and $z_k$ is a point directly visible
from $t$. Starting from $z_k$, a similar procedure can be adopted to locate a
point $z_{k-1}$ on a mirror of $M_{k-1}$, directly visible from $z_k$. 
Repeating this process, all turning points of the $mcdrp(s,t)$ can be computed
in linear time. Thus, we have the following lemma.
\begin{lem} 
If $SP(s,t)$ does not have an eave, then an $mcdrp(s,t)$ 
can be computed in $O(n(n+\beta))$ time, where $\beta = \Theta(n^2)$.
\label{theorem:algorunningtime}
\end{lem}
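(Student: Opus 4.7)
The plan is to assemble the algorithm in three phases and bound the running time of each phase using the structural lemmas already established. First I would compute $SP(s,t)$ in linear time via \cite{lp-net-84}, build the shortest path trees of $P$ rooted at $s$ and $t$, and extract $CV(s,t)$ together with $BCV(s,t)$ in $O(n)$ time using Lemma \ref{lemma:sptst}. In parallel, I would invoke Hershberger's visibility graph algorithm \cite{Hershberger89} once to obtain, in $O(n^2)$ time, the shortest path trees rooted at every vertex of $P$; these trees are the primitive queries used by both the mirror construction and the final path reconstruction.

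The second phase computes the mirror sets $M_1,M_2,\ldots,M_k$ incrementally. Starting with $M_1$ (the edges of $BCV(s,t)$ visible from $s$), I would, for each $m_{ij}\in M_i$ in clockwise order along $BCV(s,t)$, locate the span endpoints $a'_{ij}$ and $b'_{ij}$ using the precomputed shortest path trees, then scan the new portion of $span(a'_{ij},b'_{ij})$ to emit the weakly visible subedges that constitute mirrors of $M_{i+1}$; already-covered portions $span(a'_{iq},b'_{iq})$ for $q<j$ are skipped exactly as in the concatenation scheme spelled out in the discussion preceding Lemma \ref{lmirror-size}. The level index is advanced until a mirror of some $M_k$ sees $t$. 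The key accounting observation is that whenever an edge of $BCV(s,t)$ is visited during this sweep, at least one new mirror endpoint is installed on it; by Lemma \ref{lthreeindex} any given edge is visited at most $\alpha(i)$ times across the stages it participates in, so the total work is charged to the number $N$ of mirrors created, which Lemma \ref{lmirror-size} bounds by $O(n(n+\beta))$. Tangent and visibility queries against $SP(s,t)$ at each endpoint cost $O(1)$ amortized given the precomputed trees, so this phase runs in $O(n(n+\beta))$ time.

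The third phase reconstructs the path. Since $t$ is visible from some mirror of $M_k$ but from no mirror of $M_1,\ldots,M_{k-1}$, Corollary \ref{corollary:alternateinvisibility} says an $mcdrp(s,t)$ has exactly one turning point on a mirror of each of $M_1,M_2,\ldots,M_k$. I would therefore scan $BCV(s,t)$ backwards from $t$ until encountering a mirror $vw\in M_k$ visible from $t$, and set $z_k$ to be the intersection of $vw$ with the ray from $t$ through the next vertex on $SP(t,v)$, exactly as in the algorithm for $M_k,\ldots,M_1$ described just before the lemma statement; iterating this rule from $z_k$ to $z_1$ produces the turning points in $O(n)$ time.

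The main obstacle is the second phase: I have to be sure that the sweep really pays only $O(1)$ amortized per mirror endpoint despite repeated traversals of the same edges, and that no mirror is counted twice when spans of $m_{i,1},m_{i,2},\ldots$ overlap in the interleaved configurations of Figure \ref{loverlapping}. The remedy is to lean on the concatenation description in the proof of Lemma \ref{lmirror-size}: each revisit of an edge either introduces a fresh endpoint of a new internal or side mirror (charged to $N$) or terminates immediately at an already-covered span boundary (charged to the mirror whose span blocked it, hence $O(1)$ per higher-level mirror). Combining the three phases gives the $O(n(n+\beta))$ bound, with $\beta=\Theta(n^2)$ in the worst case as noted after Lemma \ref{lmirror-size}.
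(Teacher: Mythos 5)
Your proposal follows essentially the same route as the paper: compute $SP(s,t)$, $CV(s,t)$ and the all-vertex shortest path trees via Hershberger in $O(n^2)$, build the mirror sets $M_1,\ldots,M_k$ by the clockwise span-concatenation sweep with the cost charged to the mirror count $N=O(n(n+\beta))$ of Lemma \ref{lmirror-size}, and then trace the path backwards from $t$ with one turning point per mirror set, invoking Corollary \ref{corollary:alternateinvisibility} for optimality. Aside from a slight paraphrase of that corollary (it gives sufficiency of the one-turn-per-distinct-$M_i$ condition, which your construction satisfies), the argument matches the paper's.
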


\subsection{$SP(s,t)$ has one or more eaves}
\label{ssec:oneeave}

We initiate the discussion with the case when $SP(s,t)$ contains 
one eave and then generalize. Let 
$SP(s,t)=(s,u_1,u_2,\ldots, u_{q-1},u_q,\ldots,u_m,t)$ 
where $u_{q-1}u_q$ is the eave (see Figure \ref{leaveone1}).
Without loss of generality, we assume that $SP(s,t)$ makes a right turn 
(left turn) at 
every vertex of $SP(s,u_q)$ ($SP(u_{q-1},t)$), 
while traversing from $s$ to $t$. 
So, the vertices of $SP(s,u_{q-1})$ belong to $bd_{cc}(s,t)$, and
the vertices of $SP(u_q,t)$ belong to $bd_c(s,t)$.
Let $cdrp(s,t)=(s,z_1,\ldots,z_{r-1},z_r,\ldots,z_p,t)$, where $z_{r-1}z_r$
intersects $u_{q-1}u_q$. 
Note that there is only one such intersection as per the definition of $cdrp(s,t)$.  
We have the following lemma.

\begin{lem}
Every $cdrp(s,t)$ makes a right turn at $z_1,z_2,\ldots,z_{r-1}$ on 
$bd_c(s,t)$, and a left turn at
$z_r,z_{r+1},\ldots,z_p$ on $bd_{cc}(s,t)$.
\label{lemma:oneeaveturns}
\end{lem}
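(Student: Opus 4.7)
The plan is to split the $cdrp(s,t)$ at the unique intersection with the eave $u_{q-1}u_q$ and argue that each half behaves like a no-eave $cdrp$ in the corresponding subpolygon, to which Lemma~\ref{lemma:cdrpconvexsimple} and Corollary~\ref{corollary:sequence} apply directly.

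More concretely, let $x$ be the point where the edge $z_{r-1}z_r$ of the $cdrp$ crosses the eave $u_{q-1}u_q$; this crossing is unique by the definition of a $cdrp$. Cutting $P$ along $u_{q-1}u_q$ produces two subpolygons $P_s$ and $P_t$ containing $s$ and $t$ respectively, and $u_{q-1}u_q$ is an edge of each. First I would observe that the initial sub-path $(s,z_1,\ldots,z_{r-1},x)$ lies entirely in $P_s$: all of its turning points are on $bd(P)\cap P_s$ (because $cdrp(s,t)$ crosses the eave only once, so no earlier link can cross it), and it ends on the eave. Symmetrically, $(x,z_r,z_{r+1},\ldots,z_p,t)$ lies entirely in $P_t$.

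Next, I would verify that $(s,z_1,\ldots,z_{r-1},x)$ is itself a constrained diffuse reflection path from $s$ to $x$ inside $P_s$. The Euclidean shortest path from $s$ to $x$ in $P_s$ is the concatenation of $SP(s,u_{q-1})$ with the segment $u_{q-1}x$; by hypothesis this path makes only right turns, so it contains no eave in $P_s$. The simplicity of the sub-path, its turning points lying on polygonal edges of $P_s$, and the fact that it does not intersect any edge of the shortest path in $P_s$ (it only meets the eave-edge at the endpoint $x$) are all inherited from the corresponding properties of $cdrp(s,t)$. Hence Lemma~\ref{lemma:cdrpconvexsimple} and Corollary~\ref{corollary:sequence} apply to it in $P_s$, forcing $z_1,\ldots,z_{r-1}$ to lie on $bd_c(s,u_{q-1})\subseteq bd_c(s,t)$ in clockwise order, with a right turn at each.

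Applying the same argument in $P_t$ (where $SP(u_q,t)$ makes only left turns, by hypothesis) gives that $z_r,\ldots,z_p$ lie on $bd_{cc}(u_q,t)\subseteq bd_{cc}(s,t)$ and that the path makes a left turn at each, completing the proof. The main obstacle I anticipate is the bookkeeping at the crossing point $x$: one must be careful that $x$ is a legitimate target for the "no-eave" lemma (it is an interior point of the eave, not a vertex, and it is visible from both $z_{r-1}$ and $z_r$) and that no part of the path meets the eave anywhere besides $x$. Both issues follow cleanly from the defining conditions (i)--(iii) of a $cdrp$, but they should be stated explicitly so that the reduction to the previously proved no-eave case is airtight.
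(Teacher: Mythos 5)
Your proposal is correct and takes essentially the paper's approach: the paper's own proof just says it ``follows along the lines of'' Lemma~\ref{lemma:cdrpconvexsimple}, i.e., run the no-eave convexity/simplicity argument on the two portions of the path on either side of the eave crossing, and your cut of $P$ along $u_{q-1}u_q$ into $P_s$ and $P_t$ (with the shortest-path funnel to the eave having apex $u_{q-1}$, respectively $u_q$, so the sub-shortest-paths are convex and eave-free) is an explicit formalization of exactly that. One cosmetic slip only: the turning points $z_1,\ldots,z_{r-1}$ lie on $bd_c(s,u_q)$ rather than on ``$bd_c(s,u_{q-1})$'' (the vertex $u_{q-1}$ lies on $bd_{cc}(s,t)$), and symmetrically $z_r,\ldots,z_p$ lie on $bd_{cc}(u_{q-1},t)$; this does not affect the argument.
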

\begin{proof}
The proof follows along the lines of the proof of Lemma
\ref{lemma:cdrpconvexsimple}.
\end{proof}

\medskip
\begin{figure*}[ht]
\begin{minipage}[b]{1.0\textwidth}
\centering
\includegraphics[width=0.6\columnwidth]{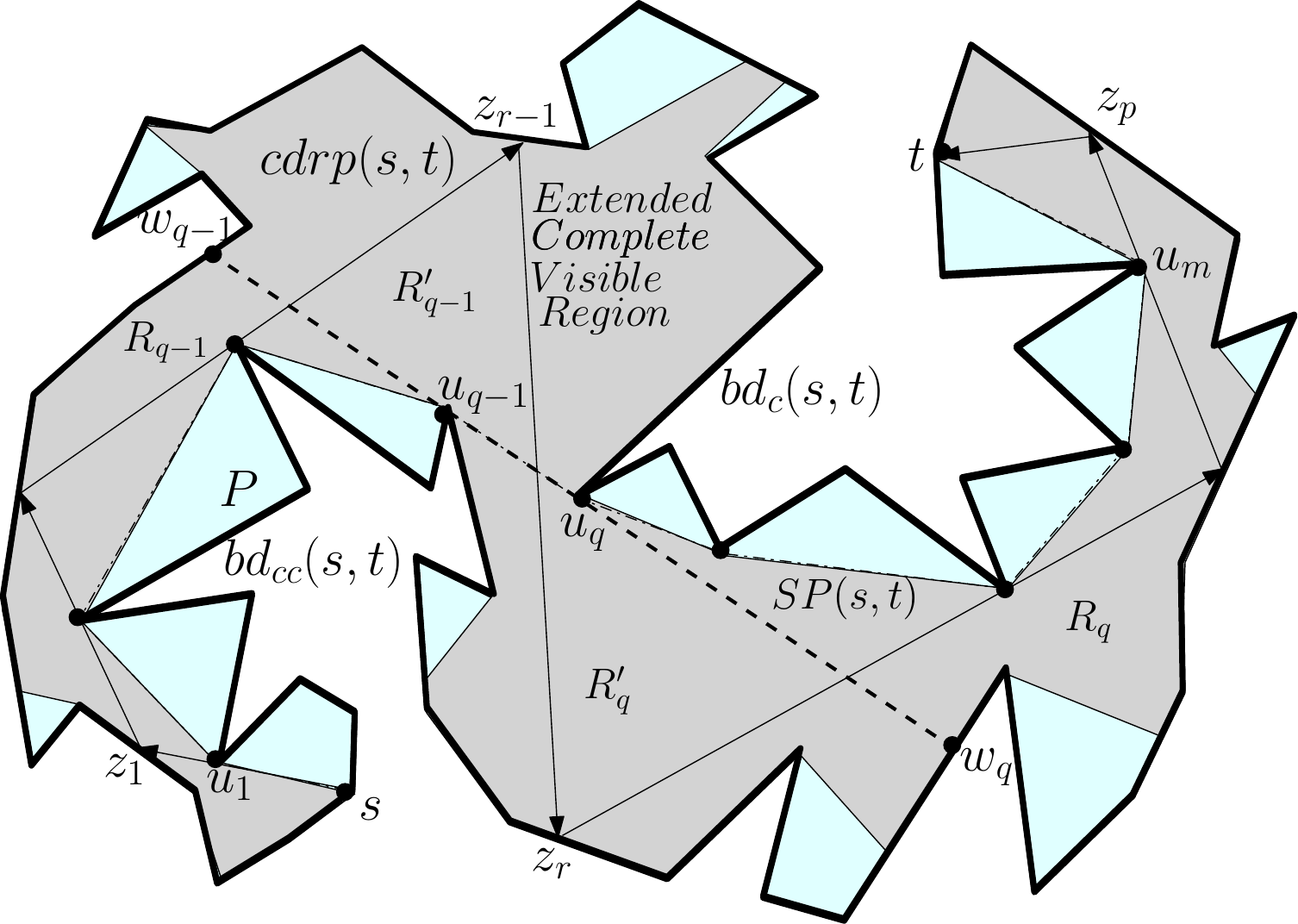}\\
\captionof{figure}{Every $cdrp(s,t)$ lies inside 
$ECV(s,t)$, \newline the extended complete visible 
region of $P$.}
\label{leaveone1}
\end{minipage}
\end{figure*}
In the case where $SP(s,t)$ had no eaves, we had computed an $mcdrp(s,t)$ based
on the analysis of the formation of mirrors on $BCV(s,t) = CV(s,t) \cap bd(P)$. 
If $SP(s,t)$ has eaves then we need to 
ensure that the $cdrp(s,t)$ crosses each eave exactly once. So, we 
need to consider the weak visibilty region of each eave and consider its
intersection with $CV(s,t)$, yielding the
{\it extended complete visibility region $ECV(s,t)$}. 
Extend eave $u_{q-1}u_q$ from $u_{q-1}$ to $bd_c(s,u_q)$, meeting it
at a point $w_{q-1}$ (see Figure \ref{leaveone1}). 
Similarly, extend $u_{q-1}u_q$ from $u_q$ 
to $bd_{cc}(u_{q-1},t)$, meeting it at a point $w_q$. 
Since $z_{r-1}z_r$ intersects $u_{q-1}u_q$, (i)
$z_{r-1}$ and $z_r$ must be weakly visible from $u_{q-1}u_q$, (ii) 
$z_{r-1}$ belongs to $bd_c(w_{q-1},u_q)$, and (iii)
$z_{r}$ belongs to $bd_{cc}(u_{q-1},w_q)$. Let $R_{q-1}$ be the region of $P$
bounded by $bd_c(s,w_{q-1})$, $SP(s,u_{q-1})$ and $u_{q-1}w_{q-1}$.  
Let $R'_{q-1}$ be the region of $P$ bounded by $bd_c(w_{q-1},u_q)$, and 
$w_{q-1}u_q$. Let $R'_{q}$ be the region of $P$ bounded by 
$bd_{cc}(u_{q-1},w_q)$, and $u_{q-1}w_q$. Let $R_{q}$ be the region of $P$
bounded by $SP(u_{q},t)$, $u_{q}w_{q}$, and $bd_{cc}(w_{q},t)$. 
We define
the {\it extended complete visible region $ECV(s,t)$}
as the set of points in $R_{q-1}\cup R'_{q-1}\cup R'_{q}\cup R_q$ such that 
(i) the left and right tangents from every point of $R_{q-1}$ to $SP(s,u_{q-1})$
lie inside $P$,
(ii) the left tangent from every point $z$ of $R'_{q-1}$ to $SP(s,u_{q-1})$ lies
inside $P$, and $z$ is visible from some point of $u_{q-1}u_q$,
(iii) the left tangent from every point $z$ of $R'_{q}$ to $SP(u_{q},t)$ lies
inside $P$, and $z$ is visible from some point of $u_{q-1}u_q$, or
(iv) the left and right tangents from every point of $R_{q}$ to $SP(u_{q},t)$
lie inside $P$. The shaded region in Figure~\ref{leaveone1} shows $ECV(s,t)$.
We have the following sequel to Lemma~\ref{lemma:cvcdrp}.
\begin{lem}
Every $cdrp(s,t)$ lies inside $ECV(s,t)$.
\label{lemma:insideecv}
\end{lem}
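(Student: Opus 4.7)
The plan is to mimic the structure of the no-eave proof of Lemma~\ref{lemma:cvcdrp}, splitting the $cdrp(s,t)$ at the unique eave-crossing link and handling the two halves symmetrically, then verifying that the crossing link sits inside the two ``transitional'' regions $R'_{q-1}$ and $R'_q$.

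First, by Lemma~\ref{lemma:oneeaveturns} we can decompose the path $cdrp(s,t)=(s,z_1,\ldots,z_{r-1},z_r,\ldots,z_p,t)$ into two convex subpaths: the right-convex subpath $(s,z_1,\ldots,z_{r-1})$ with turning points on $bd_c(s,t)$, and the left-convex subpath $(z_r,\ldots,z_p,t)$ with turning points on $bd_{cc}(s,t)$. I would handle these two halves using the same convexity-plus-simplicity argument that gave Lemma~\ref{lemma:tangents} in the no-eave setting. Concretely, for each turning point $z_i$ with $i<r$, the right convexity of $(s,z_1,\ldots,z_{r-1})$ and the fact that this subpath does not cross $SP(s,u_{q-1})$ imply that both the left and right tangents from $z_i$ to $SP(s,u_{q-1})$ lie inside $P$; hence $z_i$ lies in $R_{q-1}$ until the path reaches the extension line $u_{q-1}w_{q-1}$, after which the turning point $z_{r-1}$ sits in $R'_{q-1}$. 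Symmetrically, $z_r\in R'_q$ and $z_{r+1},\ldots,z_p\in R_q$.

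Next, I would handle the crossing link $z_{r-1}z_r$ itself. Because this link intersects the eave $u_{q-1}u_q$, the intersection point witnesses weak visibility of both endpoints from $u_{q-1}u_q$; combined with $z_{r-1}\in bd_c(w_{q-1},u_q)$ and $z_r\in bd_{cc}(u_{q-1},w_q)$ (which follows from the path being simple and crossing the eave exactly once), conditions~(ii) and~(iii) in the definition of $ECV(s,t)$ are met for $z_{r-1}$ and $z_r$ respectively. The remaining obligation is to argue that the line segments between consecutive turning points (not just the turning points) lie in $ECV(s,t)$; for links entirely on one side of the eave this follows because $R_{q-1}\cup R'_{q-1}$ and $R'_q\cup R_q$ are each bounded by pieces of $bd(P)$, by a shortest-path piece, and by the eave-extension chord, all of which the convex subpath cannot cross without violating simplicity or the no-intersection-with-$SP(s,t)$-except-at-eaves rule.

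The main obstacle I expect is the crossing link $z_{r-1}z_r$: I need to show it stays inside $R'_{q-1}\cup R'_q$ and does not stray into a pocket of $P$ outside $ECV(s,t)$. The key observation I would use is that the link crosses the eave at a single interior point, and the two sub-segments on either side of this crossing point lie in the weakly-visible wedge from that crossing point to $u_{q-1}u_q$, which is precisely captured by clauses (ii) and (iii) of the definition of $ECV(s,t)$. Once this is established, the generalization from one eave to several eaves is routine: apply the same argument to each eave-crossing link of $cdrp(s,t)$ and extend $ECV(s,t)$ to a union of analogous regions around each eave.
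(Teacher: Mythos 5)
Your proposal follows essentially the same route as the paper's proof: split the path at the eave-crossing link, place the turning points before and after the crossing inside $CV(s,t)$ (hence $ECV(s,t)$) via the convexity/tangent argument of Lemmas~\ref{lemma:cdrpconvexsimple}--\ref{lemma:cvcdrp} and Lemma~\ref{lemma:oneeaveturns}, and place $z_{r-1}$ and $z_r$ in $ECV(s,t)$ via their weak visibility from the eave $u_{q-1}u_q$. The extra care you take with the links between turning points and the multi-eave generalization only elaborates what the paper's terser proof leaves implicit, so the approach is correct and matches the paper.
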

\begin{proof}
$CV(s,t)$ is augmented to form $ECV(s,t)$ as defined earlier. This 
augmentation considers the weak visibility region of an eave. As per the 
notations of Lemma~\ref{lemma:oneeaveturns}, the turning points 
$z_1, \ldots, z_{r-2}$ and $z_{r+1}, \ldots, z_p$ of $cdrp(s,t)$ lie 
in $CV(s,t)$ and hence in $ECV(s,t)$. The turning points $z_{r-1}$ and 
$z_r$ lie in the weak visibility region of the eave $u_{q-1}u_q$ and 
hence, in $ECV(s,t)$. Thus, the entire $cdrp(s,t)$ lies inside $ECV(s,t)$.
\end{proof}
\remove{
It follows from Lemma \ref{lemma:oneeaveturns} that every 
$cdrp(s,t)$ lies entirely inside $ECV(s,t)$ with turning points
on the polygonal edges of $ECV(s,t)$ as stated in the
following lemma.
}

\medskip
\begin{figure*}[ht]
 \begin{minipage}[b]{1.0\textwidth}
  \centering
 \includegraphics[width=0.6\columnwidth]{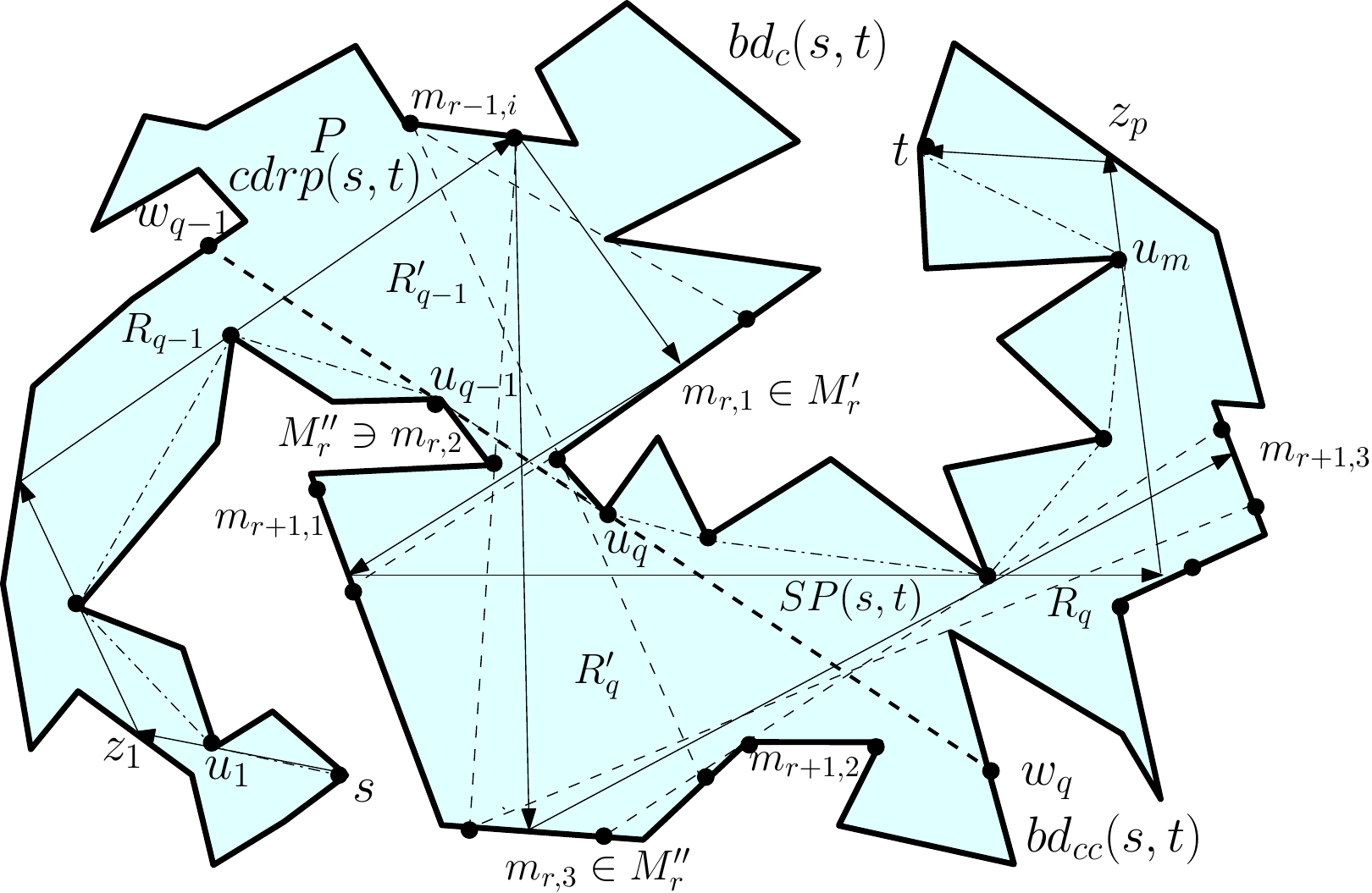}\\
\captionof{figure}{Starting from a mirror $m_{r-1,i}$, two mirror sets 
$M'_r=(m_{r,1})$ and $M''_r=(m_{r,2},m_{r,3})$ are formed on opposite sides of 
the eave $u_{q-1}u_q$. Note that $M_r=M'_r\cup M''_r=
(m_{r,1},m_{r,2},m_{r,3})$.}
\label{leaveone2}
\end{minipage}
\end{figure*}

We next proceed with the formation of mirrors on $ECV(s,t) \cap bd(P)$.
Let $m_{r-1,i}$ of $M_{r-1}$ be partially or entirely on a reflecting edge of
$R'_{q-1}$, where $r-1$ is the smallest index of all mirrors on reflecting edges of
$R'_{q-1}$ (see Figure \ref{leaveone2}). Since some points of $R'_q$ may become
visible from $m_{r-1,i}$, the next set of mirrors $M_r$ consists of two subsets
of mirrors $M'_r$ and $M''_r$, where all mirrors of $M'_r$ belong to reflecting
edges of $R'_{q-1}$, and all mirrors of $M''_r$ belong to reflecting edges of
$R'_q$. 
So, $M_r$ includes the union of mirror sets $M'_r$ and $M''_r$.
Observe that some mirrors of $M_{r+1}$ are created on the reflecting edges of
$R'_q$ due to mirrors in $M'_r$, and the mirrors of $M_{r+1}$ are 
created on the reflecting edges of $R'_q\cup R_q$ due to mirrors in $M''_r$. 
No mirror of $M_{r-1}$ can see any mirror of $M_{r+1}$,
satisfying Lemma \ref{lemma:Mi-invisibility}.

\medskip

For computing mirror sets $M'_r$ and $M''_r$, 
$M'_{r+1}$ and $M''_{r+1},\cdots$,
$M'_{r+\alpha (r-1)}$ and 
$M''_{r+\alpha (r-1)}$, the process can be viewed as first computing $M'_r,M'_{r+1},...,
M'_{r+\alpha (r-1)}$ on reflecting edges of $R'_{q-1}$, and then computing  
$M''_r,M''_{r+1},\cdots,
M''_{r+\alpha (r-1)}$ on reflecting edges of $R'_q$ 
from mirrors of reflecting edges of $R'_{q-1}$.
Observe that mirrors on reflecting edges of $R'_{q-1}$ can be created 
from mirrors of $R'_{q-1}$ of lower index 
(such as $M_{r-1,i}\in R'_{q-1}$), or from mirrors of $bd_c(s,w_{q-1})$.
So, the two types of mirrors on reflecting
edges of $R'_{q-1}$ can be constructed by scanning 
$bd_c(w_{q-1},u_q)$ twice. 
Computing mirrors of $M''_r,M''_{r+1},\cdots$ can be done by
scanning $bd_{cc}(u_{q-1},w_q)$ once. 

\medskip

For computing mirrors sets as mentioned above, we require to compute 
weakly visible regions from edges of $ECV(s,t)$. 
Shortest path trees rooted at vertices of $ECV(s,t)$ are 
computed following $bd_c(s,u_{q})$ 
using the method by Hershberger \cite{Hershberger89} as stated earlier.  
After computing $SPT(u_q)$, $SPT(u_{q-1})$
is computed treating the eave $u_{q-1}u_q$ as the next edge. 
Subsequently, shortest path trees are computed following 
$bd_{cc}(u_{q-1},t)$. 
%
%
The process of computing sets of mirrors
$M_{r+1}, M_{r+2}, \ldots , M_k$ continues as 
before, until $t$ becomes visible
from some mirror of $M_k$. 
Therefore,
the total number of mirrors created is clearly $O(n(n+\beta))$, 
as in Lemma \ref{lmirror-size}.
Finally, an $mcdrp(s,t)$ is computed as described in
Section~\ref{sssec:path-noeave-mcdrp}. 
We now summarize the result for polygons with one eave in the following lemma.

\begin{lem}
If $SP(s,t)$ has an eave $u_{q-1}u_q$, then an $mcdrp(s,t)$ can be computed in
$O(n(n+\beta))$ time.
\label{theorem:oneeave}
\end{lem}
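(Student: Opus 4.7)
The plan is to lift the Section~\ref{ssec:noeave} construction to the one-eave case by working inside $ECV(s,t)$ in place of $CV(s,t)$, and by allowing the mirror family to branch into two subfamilies on opposite sides of the eave once reflections first reach $R'_{q-1}$. By Lemma~\ref{lemma:insideecv} every $cdrp(s,t)$ is confined to $ECV(s,t)$, so it suffices to construct mirrors on the reflecting edges of $ECV(s,t)\cap bd(P)$ and then recover the path by the backward procedure of Section~\ref{sssec:path-noeave-mcdrp}.

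First, I would initialize $M_1$ as the intervals directly visible from $s$ on reflecting edges of $R_{q-1}$ and iteratively construct $M_2,M_3,\ldots$ by the span/concatenation procedure of Section~\ref{sssec:path-noeave-mirror}, proceeding clockwise along $bd_c(s,w_{q-1})$ exactly as in the no-eave case. When the first mirror $m_{r-1,i}$ of some $M_{r-1}$ lies in $R'_{q-1}$, the next set splits as $M_r=M'_r\cup M''_r$, where $M'_r$ lies on reflecting edges of $R'_{q-1}$ and $M''_r$ lies on reflecting edges of $R'_q$ (the latter arising from rays crossing the eave), as in Figure~\ref{leaveone2}. From that point onward $M'_{r+j}$ propagates along $bd_c(w_{q-1},u_q)$ and $M''_{r+j}$ propagates along $bd_{cc}(u_{q-1},t)$, each using the same overlap-removal/span-concatenation rules. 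Since a $cdrp(s,t)$ crosses the eave exactly once, Lemma~\ref{lemma:Mi-invisibility} and Corollary~\ref{corollary:alternateinvisibility} continue to certify that a $cdrp(s,t)$ whose turning points lie on distinct $M_i$ is an $mcdrp(s,t)$.

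Second, I would bound the mirror count and scanning cost by applying the proof of Lemma~\ref{lmirror-size} on each side of the eave independently. On $R'_{q-1}$, each edge of $bd_c(w_{q-1},u_q)$ is scanned twice per stage, once to propagate mirrors arriving from $bd_c(s,w_{q-1})$ and once for spans originating within $R'_{q-1}$; on $R'_q\cup R_q$, each edge of $bd_{cc}(u_{q-1},t)$ is scanned once per stage. Both contributions remain $O(n(n+\beta))$. The required shortest path trees are computed in the order: along $bd_c(s,u_q)$, then $SPT(u_q)$ and $SPT(u_{q-1})$ with the eave treated as the next polygonal edge, and finally along $bd_{cc}(u_{q-1},t)$; by Hershberger's algorithm~\cite{Hershberger89} the total cost is $O(n^2)$. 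Once $t$ becomes visible from some mirror of $M_k$, the backward recovery of Section~\ref{sssec:path-noeave-mcdrp} produces $mcdrp(s,t)$ in linear time, with the unique link that crosses $u_{q-1}u_q$ emerging at the stage where the mirror split occurred, so summing yields the claimed $O(n(n+\beta))$ bound.

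The main obstacle is verifying that the branching at the eave does not combinatorially inflate the mirror count. This is secured by the fact that a $cdrp(s,t)$ crosses $u_{q-1}u_q$ exactly once, so the two strands $M'_\cdot$ and $M''_\cdot$ do not subsequently interact: any mirror of $M''_\cdot$ is hidden from every earlier mirror on the $R'_{q-1}$ side by the extended eave and Lemma~\ref{lemma:Mi-invisibility}. Hence the amortized per-stage argument of Lemma~\ref{lmirror-size} carries through on each side independently, and the overall bound is preserved.
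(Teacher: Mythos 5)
Your proposal follows essentially the same route as the paper's own treatment: confine the construction to $ECV(s,t)$ via Lemma~\ref{lemma:insideecv}, split the mirror family at the eave into $M'_{\cdot}$ on the reflecting edges of $R'_{q-1}$ and $M''_{\cdot}$ on those of $R'_q$, build them by scanning $bd_c(w_{q-1},u_q)$ twice and $bd_{cc}(u_{q-1},w_q)$ once, compute the shortest path trees with Hershberger's algorithm in the same order (along $bd_c(s,u_q)$, then $SPT(u_q)$ and $SPT(u_{q-1})$ with the eave treated as the next edge, then along $bd_{cc}(u_{q-1},t)$), and conclude with the bound of Lemma~\ref{lmirror-size} and the backward recovery of Section~\ref{sssec:path-noeave-mcdrp}, so the argument is correct and matches the paper. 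The one overstated sentence is that every mirror of $M''_{\cdot}$ is hidden from every earlier mirror on the $R'_{q-1}$ side --- such mirrors are created precisely because they are visible from mirrors of index $r-1$ or $r$ across the eave --- but the property actually needed (no mirror of $M_{r-1}$ sees any mirror of $M_{r+1}$, so Lemma~\ref{lemma:Mi-invisibility} and Corollary~\ref{corollary:alternateinvisibility} persist) holds, exactly as the paper observes.
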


%
%
%

%

\medskip

For the case of $SP(s,t)$ having two or more eaves, mirrors can be computed 
between every two consecutive eaves and across every eave, as explained earlier
until $t$ becomes visible from a mirror of $M_k$.
We have the following lemma.

\begin{lem}
If $SP(s,t)$ has two or more eaves, then an $mcdrp(s,t)$ can be computed in
$O(n(n+\beta))$ time.
\label{lemma:twoormoreeaves}
\end{lem}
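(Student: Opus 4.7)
The plan is to reduce the multi-eave case to a straightforward iteration of the one-eave construction developed in Section \ref{ssec:oneeave}. Let $SP(s,t)$ contain eaves $e_1,e_2,\ldots,e_h$ in the order they appear from $s$ to $t$, and let $e_j = u_{q_j-1}u_{q_j}$. Each eave reverses the turn direction of $SP(s,t)$, so the subpath between two consecutive eaves is convex with turns on a fixed side of the polygon; by Lemma \ref{lemma:oneeaveturns} (applied locally at each eave) every $cdrp(s,t)$ must cross $e_j$ exactly once, and its turning points between successive crossings lie on the corresponding side of the boundary with a fixed turn direction.

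First I would define the \emph{extended complete visible region} $ECV(s,t)$ for the multi-eave case by gluing together, along the eaves, the per-eave regions $R_{q_j-1}\cup R'_{q_j-1}\cup R'_{q_j}\cup R_q$ from Section \ref{ssec:oneeave}; concretely, the region between two consecutive eaves $e_j$ and $e_{j+1}$ is taken to be the complete visibility subregion bounded by $SP(u_{q_j},u_{q_{j+1}-1})$ and the opposite boundary arc, augmented by the weak visibility regions of $e_j$ and $e_{j+1}$. By the same argument as in Lemma \ref{lemma:insideecv}, every $cdrp(s,t)$ lies inside this $ECV(s,t)$, because each turning point either lies in a per-segment complete visibility region or in the weak visibility region of the eave it is adjacent to.

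Second, I would run the mirror construction of Section \ref{sssec:path-noeave-mirror} on $ECV(s,t)\cap bd(P)$, traversing from $s$ towards $t$ and treating each eave as the transition point that was handled in Lemma \ref{theorem:oneeave}: whenever the current mirror set $M_{r-1}$ first places a mirror on a reflecting edge adjacent to an eave $e_j$, the next set $M_r$ splits as $M'_r\cup M''_r$ across $e_j$, and subsequent sets are formed on the new side of $e_j$ until the next eave is encountered. Lemma \ref{lemma:Mi-invisibility} continues to hold for each $M_i$ because the defining weak-visibility relations across each eave are the same as in the one-eave case. The shortest path trees rooted at vertices of $ECV(s,t)$ required for these scans are all computed in $O(n^2)$ total time by Hershberger's algorithm, as described earlier.

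Finally, the mirror count bound from Lemma \ref{lmirror-size} carries over without change: each vertex of $bd(P)$ can initiate only a constant number of internal-mirror sequences in each of the $k+\alpha(k)$ stages, and the argument uses only local weak visibility and the link-distance parameter $\alpha(i)$, not the global absence of eaves. Hence the total number of mirrors is again $O(n(n+\beta))$, and the extraction of $mcdrp(s,t)$ proceeds exactly as in Section \ref{sssec:path-noeave-mcdrp}, scanning from $t$ backwards and picking one turning point per $M_i$. The main subtlety is verifying that splitting $M_r$ across each eave does not create hidden dependencies between mirrors on opposite sides of an eave that would inflate the count, but since by construction every mirror of $M''_r$ is witnessed through the eave edge $e_j$ by some mirror of $M'_{r-1}$ (or of $M_{r-1}$ before the split) and no further linkage is possible, the accounting remains unchanged. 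Combining these facts yields the claimed $O(n(n+\beta))$ bound.
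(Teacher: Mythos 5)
Your proposal is correct and follows essentially the same route as the paper: the paper justifies this lemma only with the remark that mirrors are computed between every two consecutive eaves and across every eave exactly as in the one-eave case, which is precisely the iteration you spell out (gluing the per-eave $ECV(s,t)$ regions, splitting the mirror set across each eave as in Lemma \ref{theorem:oneeave}, reusing the count of Lemma \ref{lmirror-size}, and extracting the path as in Section \ref{sssec:path-noeave-mcdrp}). Your sketch is, if anything, more detailed than the paper's own one-sentence argument.
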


\medskip

The major steps of the algorithm are 
stated as follows.

\begin{algorithm}[H]
\KwIn{A source $s$ and a destination $t$ inside an $n$-vertex simple 
polygon $P$}
\KwOut{$mcdrp(s,t)$, if it exists}

Compute the Euclidean shortest path $SP(s,t)$ from $s$ to $t$\;

Compute the extended complete visibility region $ECV(s,t)$ of $P$\;

Starting from $s$, compute mirrors of $M_1,M_2,\ldots, M_k$ until $t$
becomes visible from some mirror of $M_k$\;

Starting from $t$, compute $cdrp(t,s)=(t,z_k, \ldots, z_2,z_1,s)$ by
locating turning points on mirrors of  $M_k,M_{k-1},\ldots,M_1$\;

Output $cdrp(s,t)$ as the $mcdrp(s,t)$\;
\caption{Computing $mcdrp(s,t)$}
\label{algo:smhzl}
\end{algorithm}

We conclude the computation of the $mcdrp(s,t)$ with the following 
theorem. 
\begin{theo}
For a source point $s$ and a destination point $t$ inside an $n$-vertex 
simple polygon $P$, an $mcdrp(s,t)$ can be computed in $O(n(n+\beta))$ time
if a $cdrp(s,t)$ exists. 
\end{theo}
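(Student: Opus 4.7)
The plan is to derive the theorem as a direct consequence of the three case lemmas already established in Sections~\ref{ssec:noeave} and \ref{ssec:oneeave}, together with the correctness of Algorithm~\ref{algo:smhzl}. First I would observe that the structural description of $SP(s,t)$ splits into exactly three exhaustive cases: (i) $SP(s,t)$ has no eave, (ii) $SP(s,t)$ has exactly one eave, and (iii) $SP(s,t)$ has two or more eaves. Lemmas~\ref{theorem:algorunningtime}, \ref{theorem:oneeave}, and \ref{lemma:twoormoreeaves} handle each of these cases respectively and each gives an $O(n(n+\beta))$ bound, so taking the maximum over cases still yields $O(n(n+\beta))$.

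Next I would argue correctness of the algorithm itself by stepping through its five lines. Step 1 is justified by the linear-time shortest path construction of Lee and Preparata. Step 2 builds $ECV(s,t)$, which by Lemma~\ref{lemma:insideecv} (and Lemma~\ref{lemma:cvcdrp} in the eave-free case) contains every $cdrp(s,t)$, so restricting attention to $ECV(s,t)$ loses no admissible path. Step 3 computes the mirror sets $M_1,M_2,\ldots,M_k$ across $ECV(s,t)$; by Lemmas~\ref{lemma:no-121} and \ref{lemma:Mi-invisibility} these sets faithfully capture the reachability of turning points in an optimal path, and by Lemma~\ref{lmirror-size} their total size is $O(n(n+\beta))$, so this step dominates the running time. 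Step 4 then traces back an actual $cdrp(s,t)$ by locating one turning point on a mirror of each $M_i$, exactly as in Section~\ref{sssec:path-noeave-mcdrp}.

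To establish optimality of the output I would invoke Corollary~\ref{corollary:alternateinvisibility}: since the backward scan produces exactly one turning point on a mirror of each distinct $M_i$ for $i=k,k-1,\ldots,1$, the resulting $cdrp(s,t)$ has $k$ turns and is therefore an $mcdrp(s,t)$. Conversely, any $cdrp(s,t)$ with fewer than $k$ turns would contradict the minimality of $k$ as the smallest index for which $t$ is visible from a mirror of $M_k$, a fact guaranteed by construction in Step~3. Existence of a $cdrp(s,t)$ is equivalent to $t$ being visible from some mirror of $M_k$ for some finite $k$, so the stated hypothesis is exactly what ensures the algorithm terminates with a valid output.

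The main obstacle, which is already the technical heart of the preceding lemmas, is not in this final composition but rather in the mirror-counting bound of Lemma~\ref{lmirror-size} and the careful amortized scanning of $BCV(s,t)$ across the $\alpha(i)$ possible revisits of each edge; the theorem statement simply packages those bounds together. With those lemmas in hand, the theorem follows by case analysis and the correctness argument above, giving the $O(n(n+\beta))$ time complexity with $\beta = \Theta(n^2)$ in the worst case.
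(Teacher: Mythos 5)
Your proposal is correct and follows essentially the same route as the paper's own proof: the time bound is assembled from Lemmas~\ref{theorem:algorunningtime}, \ref{theorem:oneeave}, \ref{lemma:twoormoreeaves} and \ref{lmirror-size}, optimality of the traced-back path is obtained from Corollary~\ref{corollary:alternateinvisibility} via the minimality of the index $k$ at which $t$ first becomes visible, and existence/simplicity follow from the mirror construction of Section~\ref{sssec:path-noeave-mirror}. Your write-up is simply a more detailed step-by-step walk through Algorithm~\ref{algo:smhzl} than the paper gives, not a different argument.
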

\begin{proof}
First of all note that, the 
specific way in which we construct mirrors as discussed in
Section~\ref{sssec:path-noeave-mirror} ensures that (a) the path is a simple
path, crossing $SP(s,t)$ at each of its eaves exactly once, and 
(b) we can always find a $cdrp(s,t)$, if one exists. 
The cost of computing the mirrors in $M_1,M_2,\cdots,M_k$ is $O(n(n+\beta))$
as shown in Lemmas \ref{lmirror-size}, 
\ref{theorem:algorunningtime}, \ref{theorem:oneeave} and \ref{lemma:twoormoreeaves}. 
%
%
%
The computed $cdrp(s,t)$ is actually also  
an $mcdrp(s,t)$ as its turning points are chosen on mirrors 
from mirror sets $M_i$ with minimum index $i$ on the reflecting edges,
as mentioned and required in
Corollary~\ref{corollary:alternateinvisibility}.

\end{proof}

\section{Exploring the relationship between $cdrp(s,t)$ and $drp(s,t)$}
\label{sec:explore}
In this section, we establish two properties relating (minimum) $cdrp(s,t)$
and (mimimum) $drp(s,t)$. The first one deals with an approximation ratio 
and the second one deals with a diameter. 
\subsection{Comparing the number of turns between an $mcdrp(s,t)$ and an
optimal $drp(s,t)$}
\label{ssec:approx}
Though not the main focus of our work, we explore the relation 
of an $mcdrp(s,t)$ with the optimal $drp(s,t)$ and the $mlp(s,t)$. 
We first consider 
the case where $SP(s,t)$ has no eaves. 
\begin{lem}
If $SP(s,t)$ does not have an eave, then the number of turns in 
an $mcdrp(s,t)$ is at most twice that of an optimal $drp(s,t)$.
\label{theorem:lb1}
\end{lem}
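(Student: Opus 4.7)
The plan is to take an optimal $drp(s,t)$, say $\pi^* = (s, y_1, \ldots, y_r, t)$ with $r$ turns, and show that its reflections can be simulated by a $cdrp(s,t)$ using at most $2r$ turns on $bd_c(s,t)$; since an $mcdrp(s,t)$ minimizes turns among all $cdrp(s,t)$, this immediately bounds its turn count $k$ by $2r$.

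I would first classify each turn $y_i$ of $\pi^*$ according to whether it lies on $bd_c(s,t)$ (already admissible for a $cdrp$) or on $bd_{cc}(s,t)$ (inadmissible, since when $SP(s,t)$ has no eave every $cdrp$ turn lies on $bd_c(s,t)$ by Lemma~\ref{lemma:cdrpconvexsimple}). Because $SP(s,t)$ is a convex chain on $bd_{cc}(s,t)$, any turn $y_i \in bd_{cc}(s,t)$ lies inside a single pocket of $P$ cut off by some chord $u_au_{a+1}$ of $SP(s,t)$, and both incident segments $y_{i-1}y_i$ and $y_iy_{i+1}$ must cross this chord. The central step is to show that every such pocket detour can be replaced by a sub-path lying in the region $P_c$ bounded by $SP(s,t)$ and $bd_c(s,t)$, using at most two reflections on $bd_c(s,t)$; intuitively, the extended rays from $y_{i-1}$ and $y_{i+1}$ through the chord $u_au_{a+1}$ reach reflecting edges of $bd_c(s,t)$ at points that can serve as the two replacement turns, with the new segments staying inside $P$ thanks to the complete-visibility structure of $CV(s,t)$ and the tangent properties given by Lemmas~\ref{lemma:tangents} and~\ref{lemma:cvcdrp}.

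Applying this replacement at every $bd_{cc}(s,t)$ turn and then short-circuiting any self-crossings locally (which only decreases the turn count) transforms $\pi^*$ into a $cdrp(s,t)$ from $s$ to $t$ with at most $2r$ turns, establishing the claim. The main obstacle is justifying the geometric replacement: one must verify that two appropriate reflection points on $bd_c(s,t)$ always exist with the required visibility relations, and that the replacements chain correctly even when $\pi^*$ has several consecutive turns in $bd_{cc}(s,t)$ inside the same or adjacent pockets, so that the aggregate doubling bound still holds. I would address this via a case analysis of the pocket geometry against the chords of $SP(s,t)$, leveraging the convexity of $SP(s,t)$ and the mirror set indexing $M_1, M_2, \ldots$ from Section~\ref{sssec:path-noeave-mirror} so that each replaced $bd_{cc}(s,t)$ turn contributes at most two steps in the mirror sequence.
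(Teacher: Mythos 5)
Your strategy runs in the opposite direction from the paper's: you propose to transform an optimal $drp(s,t)$ into \emph{some} $cdrp(s,t)$ with at most twice as many turns and then invoke minimality of the $mcdrp(s,t)$, whereas the paper never modifies either path --- it takes the $mcdrp(s,t)$ as given and charges its turning points to the optimal path, showing via Lemma~\ref{lemma:Mi-invisibility} that between two consecutive turning points $d_i,d_j\in bd_c(s,t)$ of the optimal $drp(s,t)$ there can be at most two turning points of the $mcdrp(s,t)$ on $bd_c(d_i,d_j)$ (a third would be visible from the first). A constructive route like yours could in principle work, but as written its central step is missing: the claim that every detour of the optimal path through a pocket behind $SP(s,t)$ can be replaced by at most two reflections on $bd_c(s,t)$, with the new segments inside $P$, with the replacements chaining correctly across consecutive $bd_{cc}(s,t)$ turns, and with simplicity and non-crossing of $SP(s,t)$ preserved, is essentially the whole content of the lemma, and you only defer it to an unspecified case analysis. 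The one concrete construction you sketch does not parse geometrically: extending the ray from $y_{i-1}$ through its crossing with the chord $u_au_{a+1}$ just continues toward $y_i\in bd_{cc}(s,t)$ and then exits $P$; it does not produce a reflection point on $bd_c(s,t)$, let alone two mutually compatible ones.

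Two further steps of your plan are unsound as stated. First, a turning point of the optimal $drp(s,t)$ is not ``already admissible'' merely because it lies on $bd_c(s,t)$: by Lemmas~\ref{lemma:cdrpconvexsimple}, \ref{lemma:tangents} and \ref{lemma:cvcdrp}, every turning point of a $cdrp(s,t)$ must lie in $CV(s,t)$, while an optimal $drp(s,t)$ (whose links may cross $SP(s,t)$ freely and whose tangents may be blocked by $bd_c(s,t)$ itself) can reflect at points of $bd_c(s,t)$ outside $CV(s,t)$; such turns cannot simply be kept, so your accounting of ``one turn kept, or two turns substituted'' does not cover all cases. Second, the clean-up step ``short-circuit any self-crossings locally, which only decreases the turn count'' is not available for reflection paths: cutting out a loop at a self-intersection point creates a turn at an interior point of $P$, which is not a legal diffuse-reflection turn, so restoring simplicity may itself demand additional boundary reflections and could inflate the count. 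The paper's charging argument sidesteps all three difficulties, which is why it goes through in a few lines once Lemma~\ref{lemma:Mi-invisibility} is in hand.
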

\begin{figure}[h]
\begin{center}
\center{\includegraphics[width=0.58\columnwidth]{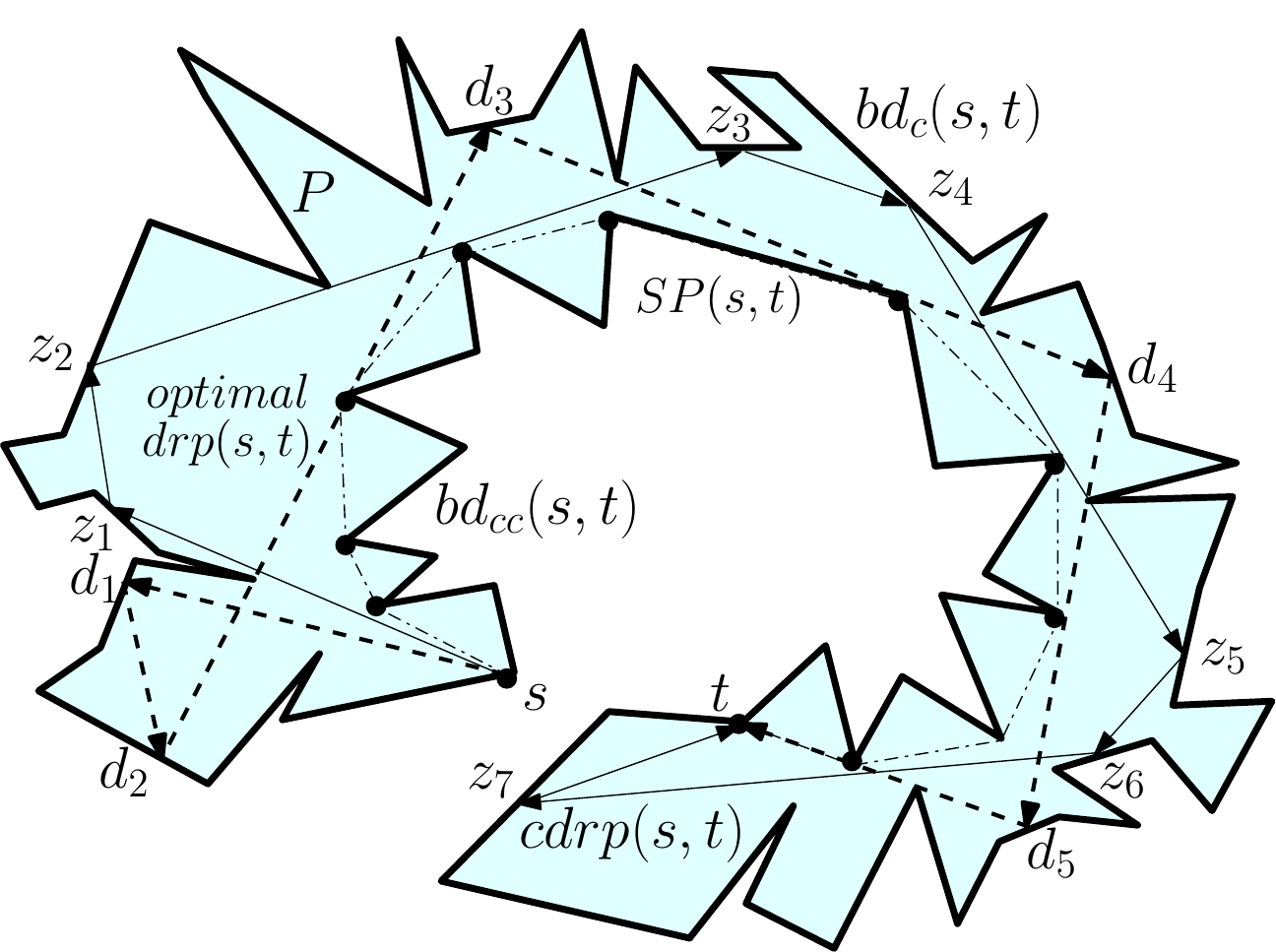}}
\caption{There may be two turning points in a $mcdrp(s,t)$ corresponding to
every turning point of an optimal $drp(s,t)$.}
\label{lmirror2}
\end{center}
\end{figure}
\begin{proof}
Let $d_i\in bd_c(s,t)$ be a turning point 
of an optimal $drp(s,t)$ (see Figure \ref{lmirror2}). Traverse this optimal path
from $d_i$ to $t$ until a turning point $d_j\in bd_c(d_i,t)$ is reached.
If $d_id_j$ is a segment and it does not intersect 
the $mcdrp(s,t)$, then there is no turning point 
of this $mcdrp(s,t)$ on $bd_c(s,t)$ between $d_i$ and $d_j$. 
If $d_id_j$ is a segment and it intersects
the $mcdrp(s,t)$, then there cannot be three or 
more turning points of   $mcdrp(s,t)$
on $bd_c(d_i,d_j)$ due to Lemma \ref{lemma:Mi-invisibility}, because in such 
a case the first and the third
turning points would become visible. 
If $d_id_j$ is not a segment, then no turning point of the 
optimal $drp(s,t)$ between $d_i$ and $d_j$ belongs to $bd_c(d_i,d_j)$. Then,
this path between $d_i$ and $d_j$ still intersects the
$mcdrp(s,t)$ exactly twice, and there can be at most two turning points
of the $mcdrp(s,t)$ on $bd_c(d_i,d_j)$. In the worst case,
all turning points of the optimal $drp(s,t)$ lie on $bd_c(s,t)$, 
and every link of this path intersects the $mcdrp(s,t)$ twice, and 
therefore, the number of turns in the $mcdrp(s,t)$ is at most twice 
that of the optimal $drp(s,t)$.
\end{proof}

\begin{lem}
If $SP(s,t)$ does not have an eave, then the number of links in 
an $mcdrp(s,t)$ is at most one less than twice the number of links in 
an $mlp(s,t)$.
\label{theorem:lb2}
\end{lem}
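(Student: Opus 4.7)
Proof plan: My strategy is the reverse of Ghosh's transformation from $SP(s,t)$ to $mlp(s,t)$: I take an $mlp(s,t)$ with $L$ links and convert it into a candidate $cdrp(s,t)$ by locally replacing each vertex-turn of $mlp$ with a pair of turns in the interiors of the two boundary edges meeting at that vertex. Since $mcdrp(s,t)$ minimizes the link count among all $cdrp(s,t)$, the link count of the constructed path upper-bounds that of $mcdrp(s,t)$, and a careful counting yields $2L - 1$.

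First, recall that in the no-eave case $mlp(s,t)$ is simple, piecewise convex, and meets $SP(s,t)$ only at $s$ and $t$; by the classical result of Ghosh \cite{g-cvpcs-91,ghosh-book-2007}, we may assume that its $L - 1$ turning points $v_1, \ldots, v_{L-1}$ all lie at vertices of $P$, and by the no-eave convexity they lie on $bd_c(s,t)$ in clockwise order. Since a ray incident at a vertex is absorbed (Section~\ref{ssec:vis-and-reflect}), no $drp$ can turn at any $v_i$, so we must move each corner to interior edge points.

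Second, for each $v_i$ let $e_i^-$ and $e_i^+$ be the two polygon edges incident to $v_i$ that lie on $bd_c(s,t)$. For a sufficiently small $\epsilon > 0$ (smaller than one-third of the length of every incident edge, and smaller than the distance between any two consecutive turns of $mlp$), I choose a point $a_i$ on $e_i^-$ at distance $\epsilon$ from $v_i$ and a point $b_i$ on $e_i^+$ at distance $\epsilon$ from $v_i$. The replacement path $Q$ visits the sequence $s, a_1, b_1, a_2, b_2, \ldots, a_{L-1}, b_{L-1}, t$, which has $2(L-1) = 2L - 2$ turning points and hence $2L - 1$ links. Simplicity of $Q$ follows from the simplicity of $mlp(s,t)$ together with smallness of $\epsilon$; every turn of $Q$ is by construction in the interior of a boundary edge; and $Q$ does not cross $SP(s,t)$ because $mlp(s,t)$ does not (by the no-eave property) and the perturbations are local. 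Therefore $Q$ is a valid $cdrp(s,t)$, and so $mcdrp(s,t)$ has at most $2L - 1$ links, which is one less than twice the link count of $mlp(s,t)$.

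The main obstacle is the rigorous justification of Step 2: showing that each segment $b_{i-1}a_i$ and $a_i b_i$ lies entirely in $P$ and avoids $SP(s,t)$, and that the sequence of these segments forms a simple path. The smallness of $\epsilon$ is the lever used for this, but care is needed at places where two consecutive turning vertices of $mlp$ are geometrically close or share an incident edge, since the two local perturbations could potentially interfere. A uniform bound on $\epsilon$ (depending on minimum turn separation and minimum edge length in $P$) resolves this. A secondary subtlety, easy but worth verifying, is the endpoint cases $s \to a_1$ and $b_{L-1} \to t$: since $s$ and $t$ lie on $bd(P)$, we must ensure these are well-defined segments inside $P$, which follows from the chosen $\epsilon$ being smaller than the lengths of the first and last links of $mlp(s,t)$.
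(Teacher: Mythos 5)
Your overall strategy (exhibit \emph{some} $cdrp(s,t)$ with at most $2L-1$ links and invoke the minimality of $mcdrp(s,t)$) is logically sound, but it rests on a premise that is false: you assume that a minimum link path can be taken with all of its $L-1$ turning points at vertices of $P$ on $bd_c(s,t)$. No such result exists, and in general it fails. The turning points of an $mlp(s,t)$ produced by Ghosh's transformation lie at intersections of consecutive links that are tangent extensions through vertices of $SP(s,t)$; these turning points typically lie in the \emph{interior} of $P$, not on $bd(P)$, let alone at vertices. The paper itself signals this when it says that $mlp(s,t)$ is a $drp(s,t)$ only \emph{if} all its turning points happen to lie on $bd(P)$. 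Once a turning point is interior, your local $\epsilon$-perturbation (replace a corner at a boundary vertex by two nearby points on the incident edges) is not even defined, and pushing an interior turn out to the boundary can change the directions of the adjacent links so that they no longer reach the next turn with a single segment -- this is exactly the difficulty that makes the lemma nontrivial, and the reason a blowup factor close to $2$ appears at all. A secondary problem, even in the favorable case, is that Ghosh-style $mlp$ links pass \emph{through} vertices of $SP(s,t)$ (they are tangents to it), so a small perturbation of the endpoints of such a link can push it across $SP(s,t)$, violating condition (ii) of a $cdrp(s,t)$; "smallness of $\epsilon$" does not by itself resolve this tangency.

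For contrast, the paper argues in the opposite direction: it does not build a path from the $mlp$, but bounds the turns of the $mcdrp(s,t)$ directly. Each link of $mlp(s,t)$ is extended to a maximal chord $ab$ with $a,b \in bd(P)$, and Lemma~\ref{lemma:Mi-invisibility} (mutual invisibility of non-consecutive mirror sets, hence of non-consecutive turning points of the computed path) forces at most two turning points of the $mcdrp(s,t)$ on $bd_c(a,b)$, with only one link contributed opposite each of the first and last $mlp$ links; summing gives $2|mlp(s,t)|-1$. If you want to salvage a constructive proof, you would have to handle interior turning points of the $mlp$ explicitly (e.g., by projecting each interior turn to the boundary and accounting for the extra links this creates), which is essentially the transformation studied in \cite{ggmnps-acdr-2012} and is not a two-line perturbation argument.
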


\begin{proof}
Consider the maximal chord $ab$ of any link of an $mlp(s,t)$, where 
$a,b\in bd(P)$ and $a\in bd_c(s,b)$. Due to Lemma \ref{lemma:Mi-invisibility}, 
there can be at most two 
turns of an $mcdrp(s,t)$ in $bd_c(a,b)$. For the first and the last link of
$mlp(s,t)$, there can be at most one link each, in an $mcdrp(s,t)$. So, the
number of links in an $mcdrp(s,t)$ is at most one less than twice the number of links in an $mlp(s,t)$.
\end{proof}

\medskip

\begin{figure}[h]
\begin{center}
\center{\includegraphics[width=0.8\columnwidth]{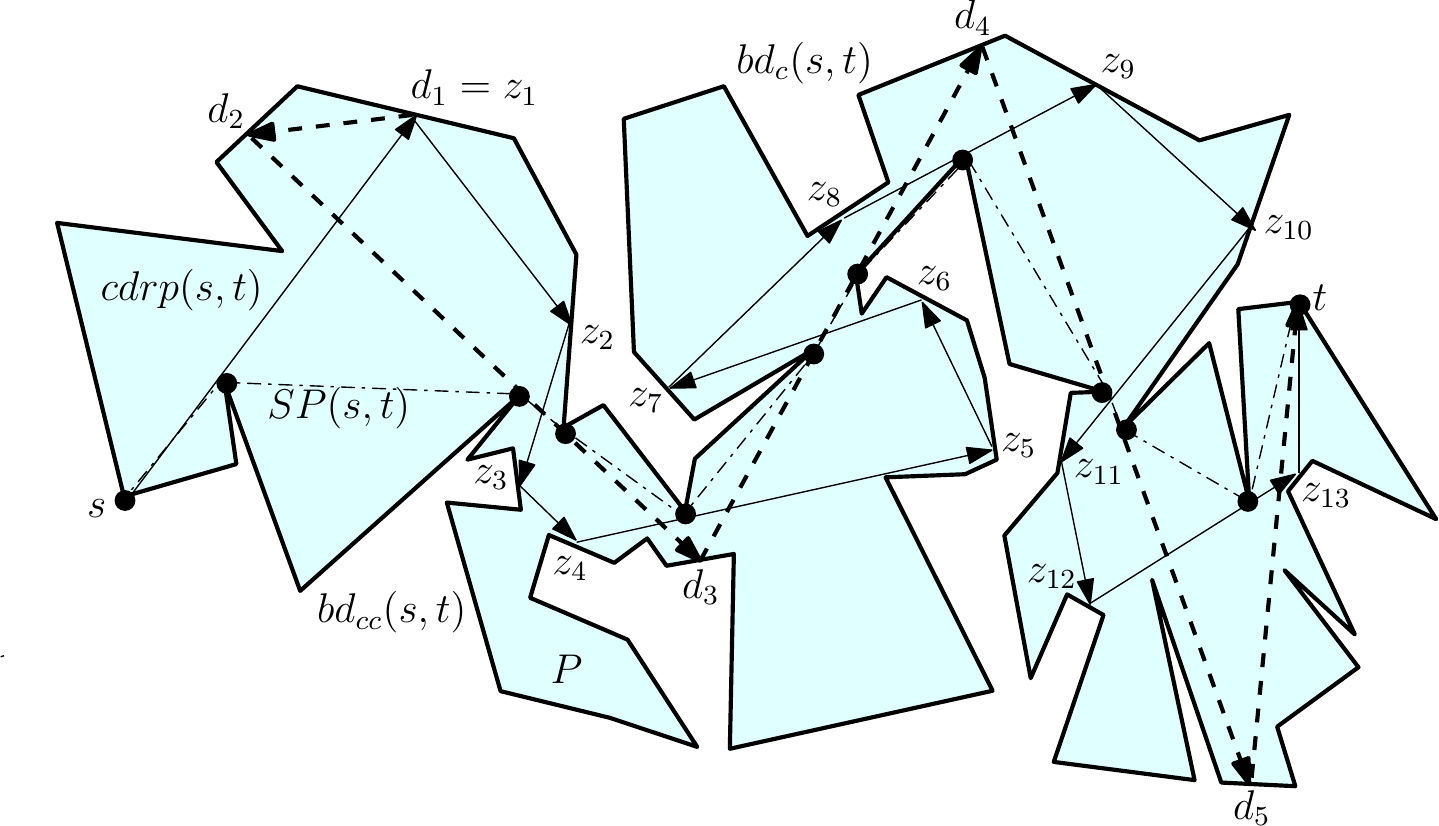}}
\caption{An optimal $drp(s,t)$ uses extensions of every eave of $SP(s,t)$ 
as its links. For every turn of the optimal $drp(s,t)$ after initial two turns, 
there are four turns of the $cdrp(s,t)$.  Observe that the
$cdrp(s,t)$ is also $mcdrp(s,t)$.}
\label{leaveopt2}
\end{center}
\end{figure}

We now generalize these results for any simple polygon $P$.
\begin{theo}
For any simple polygon $P$, the number of turns in any
$mcdrp(s,t)$ is at most $c \cdot opt$, where $2 \leq c \leq 4$.
Here $opt$ denotes the number of turns in an optimal $drp(s,t)$.
Moreover, the number of links in any $mcdrp(s,t)$ is at most 
$2((|mlp(s,t)|+l)-1)\leq 4|mlp(s,t)|-1$, where $|mlp(s,t)|$ is the number of links in an $mlp(s,t)$ and $l\leq |mlp(s,t)|$ is the number of eaves in $SP(s,t)$.
\label{theorem:approx-ratio}
\end{theo}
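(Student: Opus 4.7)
The plan is to generalize Lemma \ref{theorem:lb1} and Lemma \ref{theorem:lb2} from the eave-free setting to an arbitrary simple polygon by decomposing $SP(s,t)$ along its eaves and then arguing segment-by-segment, with a separate local bookkeeping for what happens at each eave-crossing.

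First I would let $e_1,\ldots,e_l$ be the eaves of $SP(s,t)$ listed in order from $s$ to $t$, so they partition $SP(s,t)$ into $l+1$ consecutive eave-free subpaths. By the definition of $cdrp(s,t)$, every $cdrp(s,t)$ crosses each $e_j$ exactly once, and by the remark following Corollary \ref{corollary:sequence} its turns reverse convexity precisely at these eave crossings. The optimal $drp(s,t)$ must also cross every eave at least once. Hence both the $mcdrp(s,t)$ and the optimal $drp(s,t)$ split into $l+1$ eave-free pieces aligned with the decomposition of $SP(s,t)$, and within each piece the geometric situation is exactly that of Section \ref{ssec:noeave}.

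Next I would apply the local arguments of Lemma \ref{theorem:lb1} and Lemma \ref{theorem:lb2} inside each eave-free piece. Within such a piece, any two $mcdrp$ turning points that bracket a turning point of the optimal $drp(s,t)$ (respectively, a maximal chord of an $mlp(s,t)$ link) contribute at most two $mcdrp$ turns by Lemma \ref{lemma:Mi-invisibility}, exactly as in the eave-free proofs. Summing over the $l+1$ pieces gives a factor of $2$ from the interiors of the pieces. At each of the $l$ eaves, however, the reversal of convexity can force up to two additional $mcdrp$ turns: one just before crossing $e_j$ on $bd_c$ and one just after on $bd_{cc}$ (see Figure \ref{leaveopt2}, where precisely this pattern occurs). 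Adding these eave-local contributions gives the overall approximation ratio $c$ with $2 \le c \le 4$, the value $c=2$ being attained when there are no eaves (Lemma \ref{theorem:lb1}) and $c=4$ in the worst case shown by the figure.

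For the link-count bound I would run the same segment-wise argument against $mlp(s,t)$: in each eave-free piece, Lemma \ref{theorem:lb2} contributes at most $2k_j-1$ $mcdrp$ links where $k_j$ is the number of $mlp$ links in that piece, and each of the $l$ eaves contributes two additional links for the convexity reversal. Summing $\sum_j (2k_j-1)$ with the $2l$ eave-local links and absorbing boundary corrections at $s$ and $t$ gives the stated $2((|mlp(s,t)|+l)-1)$. The inequality $l \le |mlp(s,t)|$ holds because every eave is crossed by at least one link of $mlp(s,t)$, and substituting yields $4|mlp(s,t)|-1$. The main obstacle will be the eave-local bookkeeping: proving rigorously that each eave adds at most two extra $mcdrp$ turns/links and that these extras are disjoint from the turns already counted by the no-eave argument on either side of the eave, so that the local counts sum cleanly without double-counting.
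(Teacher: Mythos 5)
Your proposal is correct and follows essentially the same route as the paper: the paper likewise applies Lemma~\ref{theorem:lb1} (resp.\ Lemma~\ref{theorem:lb2}) within the eave-free parts, charges at most two additional turns of the $mcdrp(s,t)$ to each eave crossing (via the regions $R'_{q-1},R'_q$ of Figure~\ref{leaveone2}, with the worst case being an optimal $drp(s,t)$ that uses the extension $w_{q-1}w_q$ of every eave as a link), obtains $k\le 2\,opt+2l$, and concludes with $opt\ge l$ (resp.\ $l\le |mlp(s,t)|$). The eave-local bookkeeping you flag as the remaining obstacle is treated in the paper at the same level of detail, through the decomposition into $R_{q-1},R'_{q-1},R'_q,R_q$ rather than a cut along the eaves, so no genuinely different idea is involved.
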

\begin{proof}
Let us count the number of turns $k$ taken by the $cdrp(s,t)$ computed by our
algorithm (Algorithm~\ref{algo:smhzl}). 
Recall the definition of $R_{q-1},R'_{q-1},R'_q,R_q$ as in Figure \ref{leaveone2}.
We know from Lemma \ref{theorem:lb1}
that there can be two turns in $R_{q-1}$ of the $cdrp(s,t)$ for every turn of an
optimal $drp(s,t)$. The same argument holds for turning points in $R_q$. If we
assume that an optimal $drp(s,t)$ uses $w_{q-1}w_q$ as a link in its path from
$s$ to $t$, then the $cdrp(s,t)$ can have two turning points in $R'_{q-1}$ and
two more turning points in $R_q$. So, $k \leq 2opt+2$ for one eave. If $SP(s,t)$
has $l$ eaves and an optimal $drp(s,t)$ uses extensions of every eave (see
Figure \ref{leaveopt2}), then $k \leq 2opt+2l$. Since $opt \geq l$, $k \leq
c\cdot opt$, where $2 \leq c \leq 4$.

Using Lemma \ref{theorem:lb2}, and considering the fact that at 
most two additional turns can be introduced near each
eave in an $mcdrp(s,t)$, it follows that the number of links 
in an $mcdrp(s,t)$ is at most  
$2(|mlp(s,t)|+l)-1)\leq 4|mlp(s,t)|-1$. 
\end{proof}

\medskip

Let us discuss how to cross-check the upper bound of $c$ for the $mcdrp(s,t)$
computed by our algorithm. It can be seen that counting the number of additional
turns actually taken by our computed $mcdrp(s,t)$ at each eave, a realistic
tighter upper bound for $c$ can be estimated using 
Theorem \ref{theorem:approx-ratio}. Moreover, the number of
turns in an optimal $drp(s,t)$ is at least the number of turns of a $mlp(s,t)$.
Since an $mlp(s,t)$ can
be computed in linear time \cite{g-cvpcs-91,ghosh-book-2007}, the entire
checking procedure can also be done in linear time.

In the worst case, the approximation ratio $c$ may be 4, and therefore 
worse than the result in \cite{ggmnps-acdr-2012}. Both the
approximation 
algorithms can be run and the one giving the minimum number of turns 
can be taken as the final result. However, it is interesting to observe 
that the $mcdrp(s,t)$ computed by our algorithm measures within a small 
constant factor of the optimal $drp(s,t)$.

\subsection{Constrained diffuse reflection diameter}
\label{ssection:diameter}

Let $a$ and $b$ be two vertices of $P$ such that the number of turns in an
optimal $drp(a,b)$ is maximum amongst all pairs of vertices of $P$.
The number of turning points in such a path is called the {\it diffuse
reflection diameter} of $P$. The relationship between the diffuse
reflection diameter of $P$ and the number of vertices $n$ of $P$ has been
studied in \cite{bcfhstw-corr-2013,kpabn2012}. Here we establish a similar
relationship for constrained diffuse reflection paths in $P$.
\begin{figure}[h]
\begin{center}
\center{\includegraphics[width=0.5\columnwidth]{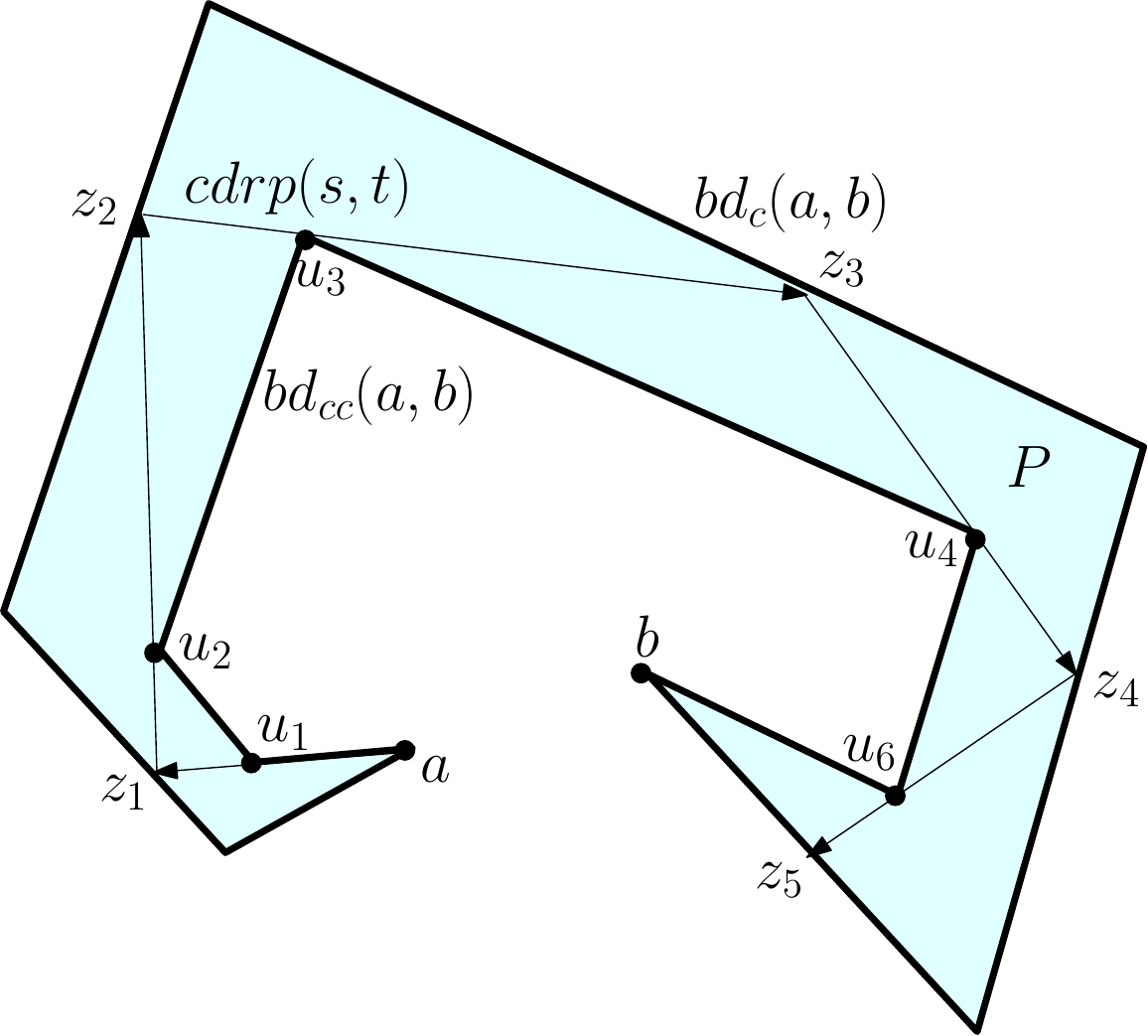}}
\caption{All edges of $bd_c(a,b)$ except the first one can have turning points
of a $mcdrp(a,b)$.}
\label{ldiameter}
\end{center}
\end{figure}

\begin{theo}
If there exists a $cdrp(a,b)$ in $P$, then the number of turns in
$mcdrp(a,b)<\frac n 2$. 
\label{lema:diameter}
\end{theo}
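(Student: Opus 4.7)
The plan is to combine Lemma~\ref{lemma:Mi-invisibility} with the reflection rule (a reflected ray never runs along its incident edge) to show that each polygonal edge of $P$ carries at most one turning point of the $mcdrp(a,b)$, and then to derive $k < n/2$ by summing two independent counts: one using $bd_c(a,b)$ and one using the intermediate vertices of $SP(a,b)$ that lie on $bd_{cc}(a,b)$.

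First I would establish the ``at most one turning point per edge'' property. By Corollary~\ref{corollary:alternateinvisibility}, two turning points $z_i$ and $z_{i+2}$ of an $mcdrp(a,b)$ belong to mirrors of the distinct sets $M_i$ and $M_{i+2}$, and by Lemma~\ref{lemma:Mi-invisibility} these mirrors are mutually invisible; but two points on the same polygonal edge are trivially mutually visible along the edge, a contradiction. Two consecutive turning points $z_i$ and $z_{i+1}$ cannot share an edge either, since the reflected ray at $z_i$ would then travel along its incident edge, which is expressly forbidden by the reflection rule stated in Section~\ref{ssec:vis-and-reflect}. Combined with the clockwise ordering of turning points (Corollary~\ref{corollary:sequence}), this yields the claim.

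I would then treat the no-eave case as the base case. By Lemma~\ref{lemma:cdrpconvexsimple} all $k$ turning points lie on $bd_c(a,b)$, and as hinted by Figure~\ref{ldiameter} the edge of $bd_c(a,b)$ incident to $a$ carries no turning point (a turn on this edge would leave the initial link degenerately along the edge itself and could be removed in an optimal path), giving $k \le n_c - 1$ with $n_c = |bd_c(a,b)|$. Independently, the construction of Section~\ref{sssec:path-noeave-mirror} shows that the visibility wavefront forming $M_1, M_2, \ldots$ advances past one vertex of $SP(a,b)$ per stage, with its boundaries being the successive tangents through $u_1, u_2, \ldots, u_m$; hence $k \le m+1$, where $m$ is the number of intermediate vertices of $SP(a,b)$, and since these $m$ vertices all lie on $bd_{cc}(a,b)$ we obtain $k \le n_{cc}$ with $n_{cc} = |bd_{cc}(a,b)|$. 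Adding the two bounds yields $2k \le (n_c - 1) + n_{cc} = n - 1$, whence $k < n/2$. The case with one or more eaves is handled analogously by applying the two counts segment-by-segment between consecutive eaves and across each eave, using Lemma~\ref{lemma:insideecv} in place of Lemma~\ref{lemma:cvcdrp} for the extended complete visibility region.

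The main technical obstacle is making the wavefront-advancement bound $k \le m+1$ fully rigorous: it requires an inductive description of how each $M_i$ is delimited by tangents through the next vertex $u_i$ of $SP(a,b)$, a fact implicit in the mirror construction but not isolated as a standalone lemma earlier in the paper. Once that is in hand, the remainder is a clean combinatorial count that transfers without trouble to the eave case.
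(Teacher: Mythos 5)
Your first count (at most one turning point per edge of $bd_c(a,b)$, hence $k\le n_c-1$) is fine and is essentially what the paper also uses. The gap is in your second count, $k\le m+1$ via the claim that the mirror wavefront ``advances past one vertex of $SP(a,b)$ per stage.'' This is not implicit in the mirror construction, and it is in general false: a new mirror stage (equivalently, a forced turn of the $mcdrp$) is caused by a blocker of the visibility between $z_{i-1}$ and $z_{i+1}$, and that blocker need not be a vertex of $SP(a,b)$. It can be a reflex vertex of $bd_{cc}(a,b)$ lying in a pocket behind a single edge of $SP(a,b)$ (the chord $z_{i-1}z_{i+1}$ may cross an $SP$ edge twice and be cut off inside the pocket), so many turns can be charged to one $SP$ edge and $m$ does not control $k$. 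The paper's own results already warn against your bound: stages and boundary progress are not in one-to-one correspondence (Lemma~\ref{lthreeindex}, Figure~\ref{loverlapping}), and Lemma~\ref{theorem:lb2} only bounds the number of $mcdrp$ links by $2|mlp(s,t)|-1$, which allows up to about $2m$ turns, strictly more than your claimed $m+1$ as soon as $m\ge 2$. So the step you yourself flag as the ``main technical obstacle'' is not a missing formalization; it is the missing (and, as stated, incorrect) heart of the argument, and without it the two counts do not add up to $n-1$.

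The paper closes exactly this hole differently: instead of counting $SP$ vertices, it charges the mutual invisibility of non-consecutive turning points (Lemma~\ref{lemma:Mi-invisibility}) to \emph{edges of $bd_{cc}(a,b)$}, arguing (in the spiral configuration of Figure~\ref{ldiameter}, declared to be the worst case when $SP(a,b)$ has no eave) that the blockers all lie on $bd_{cc}(a,b)$ and force that chain to have at least $k+1$ edges; together with the turning points lying on distinct edges of $bd_c(a,b)$ this gives $n\ge 2k+1$, i.e.\ $k<n/2$. This charging scheme survives precisely the situation that breaks your bound, because pocket blockers behind an $SP$ edge still consume edges of $bd_{cc}(a,b)$ even though they contribute no $SP$ vertex. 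For eaves, the paper simply cuts $P$ along each eave and applies the bound in each sub-polygon, rather than re-running a wavefront count across the eave. If you want to salvage your two-count structure, replace $k\le m+1$ by a blocking argument of the paper's type: show that between any two non-consecutive turning points there is a blocking edge of $bd_{cc}(a,b)$ and that distinct pairs can be assigned distinct such edges; counting $SP$ vertices cannot work.
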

\begin{proof}
Let  $P$ be a spiral polygon of $n$ vertices (see Figure \ref{ldiameter}). 
Let $a$ and $b$ be two vertices of $P$ 
such that all vertices of $bd_c(a,b)$ are convex 
and all vertices of $bd_{cc}(a,b)$ are reflex.  
Since the diameter in $P$ is between $a$ and $b$, there always
exists a $cdrp(a,b)$ in $P$, with all turning points on $bd_c(a,b)$. 
Observe that every edge of $bd_c(a,b)$ except the edge incident on $a$ can have 
a turning point of $mcdrp(a,b)$. Since the non-consecutive turning points 
of the $mcdrp(a,b)$ are not mutually visible due to Lemma
\ref{lemma:Mi-invisibility}, 
and their visibility can only be blocked by $bd_{cc}(a,b)$, the number of edges
of $bd_{cc}(a,b)$ must be at least one more than
the number of turns of $mcdrp(a,b)$. So, the bound holds for
any spiral polygon $P$ and it is tight. 
This scenario is the worst-case for any simply polygon $P$, where $SP(a,b)$ has
no eaves.
 
\medskip

Assume that $SP(a,b)$ has one eave $u_{q-1}u_q$. Cut $P$ into two
sub-polygons using $u_{q-1}u_q$. Since the bound for spiral polygons holds for
each of these two sub-polygons, the bound also holds for $P$. If $SP(s,t)$ has
two or more eaves, then $P$ can be cut using each eave, and the bound holds for
the entire polygon $P$.
\end{proof}

Barequet et al. \cite{bcfhstw-corr-2013} have proved an upper bound of $\lfloor
{\frac n 2} \rfloor -1$ on the {\it diffuse reflection diameter} of  $P$. Though
a $drp(a,b)$ exists for any pair of 
points $a$ and $b$ in $P$, no $cdrp(a,b)$ may exist. Theorem~\ref{lema:diameter}
ensures that as long as a $cdrp(a,b)$ exists, the number of reflections in 
$mcdrp(a,b)$ has a similar worst-case upper bound of $\frac n 2$. This bound is
significant because such an $mcdrp(a,b)$ may have more turning points than an
optimal $drp(a,b)$.

\section{Concluding remarks}
\label{sec:conclude}
Our algorithm for computing an $mcdrp(s,t)$ in a simple polygon can be viewed
as a transformation of $SP(s,t)$ and $mlp(s,t)$ to $mcdrp(s,t)$. It will
be interesting to see whether an $mcdrp(s,t)$ can also be computed in a
polygon with holes using similar transformations. In such a scenario, 
observe that the region enclosed by $SP(s,t)$ and an $mcdrp(s,t)$ may contain
holes, making the problem difficult. 

\bibliographystyle{plain}
\bibliography{vis}
\end{document}